\newcommand{\beqa}{\begin{eqnarray*}}
\newcommand{\eeqa}{\end{eqnarray*}}
\newcommand{\beqn}{\begin{eqnarray}}
\newcommand{\eeqn}{\end{eqnarray}}
\newcommand{\bs}{\boldsymbol}
\newcommand{\C}{\mathbb C}
\newcommand{\R}{\mathbb R}
\newcommand{\mcH}{\mathcal H}
\newcommand{\mcP}{\mathcal P}
\newcommand{\mcZ}{\mathcal Z}
\newcommand{\tf}{\tfrac}
\newcommand{\g}{\gamma}
\newcommand{\lb}{\label}
\newcommand{\rf}{\ref}
\newcounter{cnt1}
\newcounter{cnt2}
\newcounter{cnt3}
\newcommand{\blr}{\begin{list}{$($\roman{cnt1}$)$}
 {\usecounter{cnt1} \setlength{\topsep}{0pt}
 \setlength{\itemsep}{0pt}}}
\newcommand{\bla}{\begin{list}{$($\alph{cnt2}$)$}
 {\usecounter{cnt2} \setlength{\topsep}{0pt}
 \setlength{\itemsep}{0pt}}}
\newcommand{\bln}{\begin{list}{$($\arabic{cnt3}$)$}
 {\usecounter{cnt3} \setlength{\topsep}{0pt}
 \setlength{\itemsep}{0pt}}}
\newcommand{\el}{\end{list}}
\newtheorem{thm}{Theorem}[section]
\newtheorem{cor}[thm]{Corollary}
\newtheorem{ex}[thm]{Example}
\newtheorem{Def}[thm]{Definition}
\newtheorem{rem}[thm]{Remark}
\newcommand{\Rem}{\begin{rem} \rm}
\newcommand{\bdfn}{\begin{Def} \rm}
\newcommand{\edfn}{\end{Def}}
\newcommand{\tx}{\text}
\newcommand{\ba}{\begin{array}}
\newcommand{\ea}{\end{array}}
\numberwithin{equation}{section}
\date{}
\begin{document}
\title{\bf The Einstein Dual Theory of Relativity}
\author{Tepper L. Gill}
\address[Tepper L. Gill]{Department of EECS, Mathematics and Computational Physics Laboratory, Howard University,
Washington DC 20059 USA,  {\tt tgill@howard.edu; tgill@access4less.net}}
\author[Ares de Parga]{G. Ares de Parga}
\address[Gonzalo Ares de Parga]{Departmento de F\'{\i}sica, Escuela Superior de F\'{\i}sica y
Matem\'{a}ticas, Instituto Polit\'{e}cnico Nacional COFAA\\
Edif 9, U. P. Adolfo L\'{\o}pez Mateos, Zacatenco, Lindavista, 07738,
M\'{e}xico D.F., M{\'e}xico\\ \tt{E-mail~:} \tt{gadpau@hotmail.com}}
\keywords{dual theory; no-interaction; speed of light; big bang; isotopes; photo electric}
%\abstract{ write abstract here}
\begin{abstract}
{This paper is a comparison of the Minkowski, Einstein and Einstein dual theories of relativity.  The dual is based on an identity relating the observer time and the proper time as a contact transformation on configuration space, which leaves phase space  invariant. The theory is dual in that, for a system of $n$ particles, any inertial observer has two unique sets of global variables $({\bf{X}}, t)$ and $({\bf{X}}, \tau)$ to describe the dynamics.  Where ${\bf{X}}$ is the (unique) canonical center of mass.  
In the $({\bf{X}}, t)$ variables, time is relative and the speed of light is unique, while in the $({\bf{X}}, \tau)$ variables, time is unique and the speed of light is relative with no upper bound. The two sets of particle and Maxwell field equations are mathematically equivalent, but the particle wave equations are not. The dual version contains an additional longitudinal radiation term that appears instantaneously with  acceleration,  does not depend on the nature of the force and the Wheeler-Feynman absorption hypothesis is a corollary.
 
The homogenous and isotropic nature of the universe is sufficient to prove that  a unique definition of Newtonian time exists  with zero set at the big bang. The isotopic dual of $\R$ is used to improve the big bang model, by providing an explanation for the lack of antimatter in our universe, a natural arrow for time, conservation of energy, momentum and angular momentum.  This also solves the flatness and horizon problems without inflation.

We predict that radiation from a betatron (of any frequency) will not produce photoelectrons,  that matter and antimatter are gravitationally repulsive and that data from distant sources does not have a unique physical interpretation.  We provide a table showing the differences between the Minkowski, Einstein and dual versions of the special theory.}
 
\end{abstract} 
\maketitle
\section{Background and History}
In the beginning of the last century, the problem of reconciling the transformation properties of the Newtonian and Maxwell theories was the great concern.  We are now starting a new century and this problem is still with us, along with a host of new ones. 

Einstein, Lorentz and Poincar{\' e} all faced these problems directly.   In the course of his investigation, Lorentz \cite{1, 2}  showed that all of the macroscopic phenomena of optics  and electrodynamics can be explained from a detailed analysis of the microscopic behavior of electrons and ions.   Poincar{\' e} discovered an error in Lorentz's analysis and realized that, after correction the transformations formed a group, which he named for Lorentz \cite{3}.  By 1906 Poincar{\'e} had already shown that, if time is treated as an imaginary coordinate, the Lorentz group can be treated as a rotation in four-dimensional space and introduced the metric (proper-time) later introduced by Minkowski (see \cite{4}).  Poincar{\' e}'s strong background in physics and philosophy of science, in addition to his insight and understanding of the difference between mathematics and physics helped him to resist the temptation to use this ``physically unjustified" mathematical observation as a (necessary) tool for the representation of physical reality.

Independently,  Einstein related the photoelectric effect to the quantum ideas of Planck and derived the Lorentz transformations from basic kinematical arguments, as opposed to the symmetry properties of Maxwell's equations (as was done by Lorentz).  Einstein chose this approach because he did not believe Maxwell's theory would survive the existence of photons (see Brown  \cite{5}).  

Observing that the constant $c$ appears in Maxwell's equations for all inertial observers,  Einstein \cite{6} realized that a formal postulate on the velocity of light was necessary.    He proposed that all physical theories should satisfy the (well-known) postulates of special relativity:

\medskip  {
	{(\bf{1})} The physical laws of nature and the results of all experiments are independent
 
\quad of the 	particular inertial frame of the observer (in which the experiment is

\quad performed).
   
\medskip
	{(\bf{2})} The speed of light in empty space is constant and is independent of the

\quad motion of the 	source or receiver.}         

Minkowski was the first to suggest that Poincar{\'e}'s discovery be made a fundamental part of the special theory.   He was a well-known number theorist with few accomplishments in physics and a strong belief in Hilbert's program to geometrize physics \cite{7}.   (A  complete analysis of Minkowski's motivation, his knowledge of Poincar{\'e}'s work and his background in physics can be found in Walters \cite{8}.) {\it{Thus, we make explicit Minkowski's unacknowledged additional postulate to the special}} {\it{theory of relativity}}:

 \medskip
	{(\bf{3})} { The correct implementation of the first two postulates requires that time
be 

\quad treated as a fourth coordinate, and the relationship between components so

\quad constrained as to satisfy the invariance induced by the Lorentz group, using 

\quad the proper time (Minkowski space).}

\subsection{Newtonian Mechanics}
As reported by Sommerfeld, Minkowski knew that the differential of proper time is not an exact one-form (see the notes in \cite{9}).  Thus, he introduced the co-moving observer as a substitute in order to use it as a metric. 

Einstein, Lorentz, Poincar\' e,  Ritz and other important thinkers on the subject maintained their belief that space and time had distinct physical properties.  Einstein was the first to oppose Minkowski's postulate openly.     As noted by Sommerfeld, Einstein was critical of Minkowski's implicit assumption that no physics was lost by constraining the differential of proper time.  Einstein and Laub later published two papers on electrodynamics, which offered a different approach,  was simpler and did not depend on the spacetime formalism (see \cite{10}, \cite{11}).  They argued that the spacetime  formalism was complicated, required additional assumptions and did not add any new physics.

Sommerfeld later simplified Minkowski's complicated formulation, making it easy for physicists to understand the new tensor methods.    The new trend towards abstracting concepts and methods  automatically  made the theory attractive to mathematicians.  This made Minkowski's ideas  even more popular and helped to bring  them to the attention of the masses.  In this air of euphoria, it was not noticed that the theory did not work for two or more particles and thus was far from an  extension of Newton's mechanics. (This is the true cause of the twin paradox.) By the time problems in attempts to merge the special theory with quantum mechanics forced researchers to take a new look at the foundations of electrodynamics,  Minkowski's postulate had become sacred.  When Einstein considered the extension of the special to the general theory, he was only interested  in one which extended Minkowski's postulate (see Pais \cite{12} and  \cite{13}). 
  
Once it was accepted that the proper Newtonian theory should be invariant under the Lorentz group, the problem was ignored until after World War Two when it was realized that quantum theory did not solve the problems left open by the classical theory.   

In classical electrodynamics, Dirac partially by-passed many of the problems by replacing particles with fields (see \cite{14}).   However, this approach led to the first example of a divergent theory (infinite self-energy).  This divergency was the main motivation for the Wheeler-Feynman approach to classical electrodynamics (see \cite{15}).  Their theory solved the divergency problem, but could not be used as the foundations for quantum theory.  However, it still give Feynman a different approach to quantum electrodynamics (QED).

The failure to solve the classical problem forced researchers to use the Dirac theory as the basis for relativistic quantum mechanics and QED.  This approach maintained the infinite self-energy divergence and introduced a few others.  These problems were later by-passed by Feynman, Schwinger and Tomonaga in the late 1940's leading to the great successes of that era.  It was expected that the mathematicians would eventually find the correct theory to justify the methods of QED.  However, by the early 1980's, it became clear that this was not to be and the next generation pinned their hopes on string theory as the best way forward.  At this time, we have no definitive answers.  The development of the electro-weak theory and the standard model have each added additional problems. Thus another serious look at the classical situation can't make things worst.

\subsubsection{The $2$-particle time problem} 
In order to understand the  two particle time problem, we consider two inertial observers $O$ and $O'$. Without loss, assume both clocks begin when their origins coincide and $O'$ is moving with uniform velocity ${\bf v}$ as seen by $O$.  Let two particles, each the source of an electromagnetic field, move with velocities ${\bf w}_i \,  (i=1,2)$, as seen by $O$, and ${\bf w}_i' \,  (i=1,2) $, as seen by $O'$, so that:  
\beqn 
\begin{gathered}
{\bf x}'_i={\bf x}_i-\gamma ({\bf
v}){\bf v}t+(\gamma ({\bf v})-1)({\bf x}_i\cdot {\bf v}/\left\| {\bf v} \right\|^2){\bf v} \hfill \\ 
{\rm{and}}  \hfill \\ 
{\bf x}_i={\bf x}'_i+\gamma ({\bf v}){\bf v}t'+(\gamma ({\bf
v})-1)\left( {{\bf x}'_i\cdot {\bf v}/\left\| {\bf v}\right\|^2} \right){\bf v}, \hfill \\ 
\end{gathered} 
\eeqn
with $\gamma ({\bf v})=1/\left[ {1-\left( {{\bf v}/c}\right)^2} \right]^{1/2}$
represent the spacial Lorentz transformations between the corresponding observers. 
Thus, there is clearly no problem in requiring that the positions transform as
expected.  However, when we try to transform the clocks, we see the problem at once
since we must have, for example, 
\beqn
 t'=\gamma ({\bf v})\left( {t-{\bf x}_1\cdot {\bf v}/c^2} \right)\ \ {\rm{and}}\ \ t'=\gamma
({\bf v})\left({t-{\bf x}_2\cdot {\bf v}/c^2}\right). 
\eeqn   
This is clearly impossible unless ${{\mathbf{x}}_1} \cdot {\mathbf{v}} = {{\mathbf{x}}_2} \cdot {\mathbf{v}}$.  Thus  we cannot use the observer clock to share information (with other observers) about two or more particles. Thus, we conclude that we cannot use the observer's clocks to maintain the first postulate.   
\subsubsection{The $n$-particle position problem}  
Pryce was the first to study the center of mass problem for $n$ particles (see \cite{16}). He concluded that there are three possibilities, but only one is canonical and available to all observers. His representation led to the implication that the canonical center-of-mass cannot be the three-vector part of a four-vector. (This problem is almost seventy five years old.) The analysis of Pryce will be discussed fully in the next section and the problem will be made explicit.

After Pryce's investigation, Bakamjian and Thomas  showed that they could construct a many-particle quantum theory that satisfied Einstein's two postulates, but not Minkowski's (see  \cite{17}).  They further suggested that, with the addition of  Minkowski's postulate, their theory would only be compatible with free particles.  
\subsubsection{No-Interaction}
There are two major no-interaction theorems: the first was due to Haag \cite{18} and applies to the foundations of quantum field theory. Today It is often confused with the one proved by Currie et al  \cite{19}, which shows that Bakamjian-Thomas  were correct. The theorem has since been extended to an arbitrary number of particles by Leutwyler {\cite{20}}. We present the general form. (For a recent version, see \cite{40}.)    
\begin{thm}{\rm{(No-Interaction Theorem)}} Consider a system of particles $\left\{ {\left( {{{\mathbf{x}}_i},{{\mathbf{p}}_i}} \right)} \right\}_{i = 1}^n$  defined  on $\R^{3n} \times \R^{3n} $ (phase space).  Supposed that the following is satisfied:
\begin{enumerate}
\item   The system has a Hamiltonian representation. 
\item  The system has a canonical representation of the Poincare group.
\item  Each ${{{\mathbf{x}}_i}}$ is the vector part of a four-vector.
\end{enumerate}
Then these assumptions are only compatible with free particles.
\end{thm}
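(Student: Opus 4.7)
The plan is to exploit hypotheses (1)--(3) simultaneously by deriving, via the Jacobi identities on the canonical realization of the Poincar\'e Lie algebra, functional constraints on the Hamiltonian $H$ that are stringent enough to force $H = \sum_i H_i(\mathbf{p}_i)$. The three hypotheses give me three usable ingredients: the canonical Poisson brackets $\{x_i^a, x_j^b\} = 0$, $\{x_i^a, p_j^b\} = \delta_{ij}\delta^{ab}$, $\{p_i^a, p_j^b\} = 0$ from (1); the ten generators $H$, $\mathbf{P}$, $\mathbf{J}$, $\mathbf{K}$ obeying the Poincar\'e algebra under Poisson brackets from (2); and a world-line condition saying each $x_i^\mu = (ct, \mathbf{x}_i)$ transforms as a four-vector from (3).

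The first step is to translate hypothesis (3) into a Poisson-bracket identity. Under an infinitesimal boost of rapidity $\boldsymbol{\lambda}$, the event $(t, \mathbf{x}_i(t))$ on the $i$-th world line must map to $(t + \boldsymbol{\lambda}\cdot\mathbf{x}_i/c^2,\, \mathbf{x}_i + \boldsymbol{\lambda} t)$; comparing the transformed position evaluated on the hyperplane $t=0$ with the canonical action $\delta_0 x_i^b = \lambda^a \{K^a, x_i^b\}$ yields the world-line condition
\[
\{K^a, x_i^b\}\big|_{t=0} \;=\; -\,\frac{x_i^a}{c^2}\,\{H, x_i^b\}.
\]
This is the crucial bridge between four-vector transformation and canonical structure.

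Next I would feed this into the Jacobi identity for $K^a$, $x_i^b$, $x_j^c$. Using $\{x_i^b, x_j^c\} = 0$ and the Leibniz rule on the world-line condition, the identity collapses, for $i \ne j$, to
\[
x_i^a\,\{\{H, x_i^b\}, x_j^c\} \;-\; x_j^a\,\{\{H, x_j^c\}, x_i^b\} \;=\; 0.
\]
Treating $\mathbf{x}_i$ and $\mathbf{x}_j$ as independent coordinate blocks and separating according to the monomials $x_i^a$ and $x_j^a$ gives $\{\{H, x_i^b\}, x_j^c\} = 0$ for every $i \ne j$. Replaying the same Jacobi argument with the pairs $(K^a, p_j^c)$ and with the rotation generators $\mathbf{J}$ removes cross-dependence on momenta as well, leaving $\dot{\mathbf{x}}_i = \{H, \mathbf{x}_i\}$ a function of $\mathbf{p}_i$ alone. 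Integrating then yields $H = \sum_i H_i(\mathbf{p}_i)$, and the Poincar\'e algebra relation $\{K^a, P^b\} = \delta^{ab} H/c^2$ pins each $H_i$ to the free relativistic energy $\sqrt{\mathbf{p}_i^2 c^2 + m_i^2 c^4}$, proving the theorem.

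The main obstacle is the separation step I glossed over: the single scalar identity derived from the Jacobi bracket must be upgraded to the \emph{independent} vanishing of each double bracket, and this is precisely the delicate piece where Currie--Jordan--Sudarshan, and later Leutwyler in the $n$-particle extension, had to do careful bookkeeping. One cannot simply invoke ``independence of variables,'' because a priori the double brackets $\{\{H, x_i^b\}, x_j^c\}$ could themselves depend on $\mathbf{x}_i$ and $\mathbf{x}_j$ in ways that cancel in the Jacobi identity. The rigorous route is to differentiate the Jacobi-derived identity successively with respect to $\mathbf{x}_i$ and $\mathbf{x}_j$ and exploit rotational covariance to exhaust the admissible scalar structures built from $\mathbf{x}_i$, $\mathbf{x}_j$, $\mathbf{p}_i$, $\mathbf{p}_j$, before one is entitled to conclude the separate vanishing of each double bracket. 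This bookkeeping, together with Leutwyler's combinatorial handling of all particle pairs, is the real content of the proof; the algebraic manipulation sketched above is merely its scaffolding.
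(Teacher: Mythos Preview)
The paper does not prove this theorem at all. Theorem~1.1 is merely \emph{stated} in the background section, with attribution to Currie--Jordan--Sudarshan~\cite{19} and the $n$-particle extension of Leutwyler~\cite{20}; the authors then immediately move on, using the result only as motivation for dropping Minkowski's postulate. There is therefore no ``paper's own proof'' against which to compare your proposal.

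That said, your sketch is a faithful outline of the Currie--Jordan--Sudarshan argument as it appears in the original literature: the world-line condition $\{K^a,x_i^b\}=-\tfrac{1}{c^2}x_i^a\{H,x_i^b\}$ is the correct translation of hypothesis~(3), and feeding it into the Jacobi identity for $(K^a,x_i^b,x_j^c)$ is indeed the engine of the proof. You are also right to flag the separation step as the genuinely hard part; the passage from the single Jacobi relation to the independent vanishing of each mixed double bracket is exactly where Leutwyler's combinatorics and the rotational-covariance bookkeeping are required, and your acknowledgment that this cannot be done by a naive ``independence of variables'' argument is well placed. So your proposal is a sound plan for reproducing the literature proof, but it is not something that can be checked against this paper, which simply quotes the result.
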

All attempts to keep Minkowski's postulate, avoid The No-Interaction Theorem, include Newtonian mechanics and merge with quantum mechanics have failed.

In 1963, in the same paper where he suggested the study of strings, Dirac \cite{21} openly challenge the fundamental nature of Minkowski's postulate.   He  wrote:
\begin{quotation}
What appears to our consciousness is really a three-dimensional section of the four-dimensional picture. We must take a three-dimensional section to give us what appears to our consciousness at one time; at a later time we shall have a different three-dimensional section. The task of the physicist consists largely of relating events in one of these sections to events in another section referring to a later time. Thus the picture with four-dimensional symmetry does not give us the whole situation. This becomes particularly important when one takes into account the developments that have been brought about by quantum theory. Quantum theory has taught us that we have to take the process of observation into account, and observations usually require us to bring in the three-dimensional sections of the four-dimensional picture of the universe. ...

when one looks at gravitational theory from the point of view of the sections, one finds that there are some degrees of freedom that drop out of the theory. The gravitational field is a tensor field with 10 components. One finds that six of the components are adequate for describing everything of physical importance and the other four can be dropped out of the equations. One cannot, however, pick out the six important components from the complete set of 10 in any way that does not destroy the four-dimensional symmetry.
\end{quotation}
 
The invariance requirement for Maxwell's equations can be satisfied by the Lorentz group without Minkowski's   postulate (see \cite{22}).   We conclude that Minkowski's  postulate imposes an additional condition on Einstein's special theory of relativity for one particle, but fails completely for two or more particles at the classical level and creates even more problems at the quantum level. 

\subsection{The 2.7 $^{\circ}$K mbr and Mach's Principle}
Two years after Dirac's paper, Penzias and Wilson discovered, the 2.7 $^{\circ}$K microwave background radiation (mbr).  It has been known since, that this radiation defines a unique preferred frame of rest, which exists throughout the universe and is available to all observers (see \cite{23}).  This radiation  is highly isotropic with anisotropy limits  set at $0.001\%$.    Direct measurements have been made of the velocity of our Solar System and Galaxy relative to the mbr (370 and 600 $km/sec$ respectively, see Peebles \cite{24} ). 

Peebles has  suggested that, the special theory is valid with or without a preferred frame, so that the mbr does not violate the special theory.  However, this statement is not obvious, in addition,  general relativity predicts that at each point one can adjust their acceleration locally to find a freely falling frame where the special theory holds. In this frame, observers with constant velocity are equivalent.  Thus, according to the general theory there is an infinite family of freely falling frames. The Penzias and Wilson findings show that, one can set the acceleration equal to zero.
\subsection{Major Foundational Problems}
There are many opinions about the role of mathematics in physics.  In this section, we first define  the proper role of theoretical physics and the proper role of mathematics in relationship to physics.  We then identify seven major problems, that must be solved if we are to  provide a solid foundation for physics to move forward in the twenty first century. 
\begin{rem}Many may think that the role of theoretical physics and of mathematics in relationship to physics is obvious and strongly question the necessity for this section.  However, we live at a time when the majority view is that the previously unsolved problems are of no real concern, pointing out the great empirical successes of the past.  

These unsolved problems have been with us so long, that the role and view of theoretical physics as a tool for (and a part of) science is in question. The recent book by Frisch \cite{41} on classical electrodynamics not only provides a clear discussion of the problems, and internal (mathematical) inconsistances,  he further assumes they have no solution and suggests that this state of affairs be accepted as a natural part of the theoretical landscape. Similar sentiments have been expressed by Schweber \cite{42} concerning the well-known difficulties in QED.
\end{rem} 
\subsubsection{Theoretical Physics}
The objective of theoretical physics is to design faithful representations or models of the physical world.  These designs must be able to describe the cause effect relationships observed in experiment and, they must be physically and mathematically consistent.   To be useful, these designs must also be constructed using a minimal number of variables and parameters.  
\newline  The basic postulate is that:

\indent \; \; \; {  Mathematics is the correct tool for the design, analysis and certification  
\newline \indent  \quad \quad  of the consistency of representations of physical reality}.  
\subsubsection{Mathematics}
From the (restricted) view of theoretical physics, mathematics is defined as:
\begin{enumerate}
\item A tool for the design of internally consistent languages and structures.
\item A tool for the design of representations of the physical world.
\item A tool for the qualitative and quantitative analysis of data about and  representations of the physical world.
\end{enumerate}
In some cases, mathematical languages and structures, designed for other purposes, have  become perfect tools for certain parts of physics (e.g., group theory, probability theory, statistics).   However, the most useful languages and structures have been those specifically designed for physics (e.g., geometry, calculus, differential and partial differential equations, vector analysis,   geometric algebra and isotopes).  

Thus, the role of mathematics in theoretical physics is that of a tool. This is where there appears to be confusion.  We should be clear that, any mathematical model resulting from a theoretical design is not physical reality, but at most, the best representation we can design at this point in our intellectual evolution. (Anyone seeking absolute understanding or knowledge will not find it in physics.) 
\begin{rem}
From this perspective, ``mathematics is amazingly effective in physics" because it was designed for just that purpose.
\end{rem}
We have identified seven major problems that must be faced directly if we want to design a consistent structure, which will provide a clear path forward in the twenty first century.  Any design:
\begin{enumerate}
\item {{{must be compatible with the two postulates of Einstein};}} 
\item {{{must be compatible with Newtonian mechanics};}}
\item {must be compatible with classical electrodynamics;} 
\item {{must be compatible with the  2.7 $^{\circ}$K MBR};}
\item {{{ must be compatible with quantum mechanics};}}
\item {{{must be compatible with the results of general relativity} and, }}
\item {{{must be mathematically consistent}. }}
\end{enumerate}
These are the seven  foundational pillars of theoretical physics.   In the remaining sections of the paper, we develop the first four requirements above while insuring that they satisfy the last requirement.  The remaining requirements are part of an ongoing effort.
\section{Newton and Einstein without Minkowski} 
We begin with the design of a general model that includes Newton and Einstein.  We assume a classical interacting system of $n$-particles  defined in terms of physical variables and observed by $O$ in an inertial frame.   Observer $O$ is able to identify each particle and attach a vector ${\bf{x}}_i$ to the $i^{\tx{th}}$ particle, denoting its spacial distance to the origin.   
\subsubsection{One-Particle Clock}
First, we construct a unique clock for each particle.   Let $O$ observe the dynamics of particle $i$ using coordinates $({\bf{x}}_i, t)$.  If ${\mathbf{w}}_i$ is the velocity of particle $i$ as seen by $O$, let ${\gamma ^{ - 1}}\left( {\mathbf{w}}_i \right) = \sqrt {1 - {{{{\mathbf{w}}_i^2}} \mathord{\left/
 {\vphantom {{{{\mathbf{w}}_i^2}} {{c^2}}}} \right.
 \kern-\nulldelimiterspace} {{c^2}}}} $.  The $i^{\tx{th}}$ particle proper time is defined by:
\beqn
d\tau_i  ={\g}^{-1}({\bf{w}}_i)dt,\quad {\mathbf{w}}_i = \frac{{d{\mathbf{x}}_i}}{{dt}},\quad d{\tau_i^2} = d{t^2} - \tfrac{1}{{{c^2}}}d{{\mathbf{x}}_i^2}.
\eeqn
We  can also rewrite the last term to get:
\beqn\lb{id0}
\quad \quad d{t^2} = d{\tau_i^2} + \tfrac{1}{{{c^2}}}d{{\mathbf{x}}_i^2}, \Rightarrow cdt = \left( {\sqrt {{{\mathbf{u}}_i^2} + {c^2}} } \right)d\tau_i ,\quad {\mathbf{u}}_i = \frac{{d{\mathbf{x}}_i}}{{d\tau_i }}= {\g}({\bf{w}}_i){{\mathbf{w}}_i}.
\eeqn
If we let $b_i={\sqrt {{{\mathbf{u}}_i^2} + {c^2}} }$, the second term in equation (\rf{id0}) becomes $cdt =b_id\tau_i$.  This leads to our first identity:
\beqn{\lb{id1}}
\frac{1}{c}\frac{d}{{dt}} \equiv \frac{1}{{{b_i}}}\frac{d}{{d{\tau _i}}}
\eeqn
This identity provides the correct way to define the relationship between the proper time and the observer time for the $i^{th}$ particle.   If we apply the identity to ${\bf{x}}_i$, we obtain our second new identity, which shows that the transformation leaves the configuration (or tangent) space invariant:
\beqn{\lb{id1a}}
\frac{{{{\mathbf{w}}_i}}}{c} = \frac{1}{c}\frac{{d{{\mathbf{x}}_i}}}{{dt}} \equiv \frac{1}{{{b_i}}}\frac{{d{{\mathbf{x}}_i}}}{{d\tau_i }} = \frac{{{{\mathbf{u}}_i}}}{{{b_i}}}.
\eeqn
The new particle coordinates are $({\bf{x}}_i, \tau_i)$.  In this representation, the position ${\bf{x}}_i$ is uniquely defined relative to $O$, while $\tau_i$ is uniquely defined by the $i^{\tx{th}}$ particle. Using ${\g}({\bf{w}}_i)=H_i/m_ic^2$, we can also write  $d{\tau _i} = \left( {{{{m_i}{c^2}} \mathord{\left/ {\vphantom {{{m_i}{c^2}} {{H_i}}}} \right.  \kern-\nulldelimiterspace} {{H_i}}}} \right)dt$.   The $i^{\tx{th}}$ particle momentum can be represented as ${\bf{p}}_i=m_i \g({\bf{w}}_i){\bf{w}}_i=m_i{\bf{u}}_i$, where $m_i$ is the particle rest mass.  Thus, the phase space variables are left invariant.
\subsubsection{Many-Particle Clock} 
To construct the many-particle clock, we suppose the interacting particles have proper clocks $\tau_i$, Hamiltonians  $H_i $ and total Hamiltonian $H = \sum_{i = 1}^n {H_i }$.  We define the effective mass $M$ and total momentum $ {\bf P}$ by
\[ 
Mc^2  = \sqrt {H^2  - c^2 {\bf P}^2 },\quad {\bf P} = \sum\limits_{i = 1}^n {{\bf
p}_i }.  
\]          
We can then represent $H$ as $H = \sqrt {c^2 {\bf P}^2  + M^2 c^4 }$.  

Pryce, found that there are three possible definitions for the center of mass position vector.  However, only one of them is canonical and independent of the frame in which it is defined.  This is the natural and necessary choice if we want a theory that provides the same physics to all observers and is compatible with quantum mechanics. In our case, $\bf{X}$ is defined  in the $O$ frame by (see \cite{25}):
\beqn{\lb{cem}}
{\mathbf{X}} = \frac{1}{H}\sum\limits_{i = 1}^n {{H_i}{{\mathbf{x}}_i} + \frac{{{c^2}\left( {{\mathbf{S}} \times {\mathbf{P}}} \right)}}{{H\left( {M{c^2} + H} \right)}}}, 
\eeqn
where $\bf{S}$ is the global spin of the system of particles relative to $O$.   (It is clear that ({\rf{cem}}) cannot represent the vector part of a four-vector.)      If there is no interaction, $S, H$ and $M$ are constant, with no dependence on the $\{{\bf{x}}_i, \, {\bf{p}}_i\}$ variables, so that:
\[
\left\{ {{X_i},{X_j}} \right\} = \sum\limits_{k = 1}^n {\frac{{\partial {X_i}}}{{\partial {{\mathbf{p}}_k}}} \cdot \frac{{\partial {X_j}}}{{\partial {{\mathbf{x}}_k}}} - \frac{{\partial {X_j}}}{{\partial {{\mathbf{p}}_k}}} \cdot \frac{{\partial {X_i}}}{{\partial {{\mathbf{x}}_k}}}}  \equiv 0.
\]
However, when interaction is present, $S, H$ and $M$ may all depend on the $\{{\bf{x}}_i, \, {\bf{p}}_i\}$ variables, so that in general $\left\{ {{X_i},{X_j}} \right\} \ne 0$.  Since $\bf{X}$  is the canonical conjugate of $\bf{P}$, it precisely what we need for a consistent merge with quantum mechanics.

Let $\bf{V}$ be the velocity of $\bf{X}$ with respect to $O$.  It follows that $H$ also has the representation $H= Mc^2\g({\bf{V}})$, so that  $\g({\bf{V}})^{-1}=(Mc^2 /H)$.  In this representation, we see that  $d\tau =\g({\bf{V}})^{-1}dt = (Mc^2 /H)dt$ does not depend on the number of particles in the system.    It follows that, as long as $Mc^2 /H$ is fixed,  $\tau$ is invariant, so that the number of particles $n$, can increase or decrease without changing $\tau$.  (This means that number $n$ is not conserved and, in some cases of physical interest,  may even be a  integer-valued random variable).

From $d{t^2} = d{\tau ^2} + {{d{{\mathbf{X}}^2}} \mathord{\left/
 {\vphantom {{d{{\mathbf{X}}^2}} {{c^2}}}} \right.
 \kern-\nulldelimiterspace} {{c^2}}}$, we see that  (${\mathbf{U}} = {{d{\mathbf{X}}} \mathord{\left/
 {\vphantom {{d{\mathbf{X}}} {d\tau }}} \right.
 \kern-\nulldelimiterspace} {d\tau }}$)
\[
{c^2}d{t^2} = \left( {{c^2} +{\mathbf{U}}^2} \right)d{\tau ^2}\quad  \Rightarrow \quad cdt = \left( {\sqrt {{c^2} + {{\mathbf{U}}^2}} } \right)d\tau .
\]
It is easy to see that ${\mathbf{U}}=\g({\mathbf{V}}){\mathbf{V}}$, so that ${\mathbf{U}}$ is constant.
If we define $b=\sqrt{{\bf{U}}^2+c^2}$, we can write $cdt=bd{\tau}$.  Since $b$ is constant we have: $ct=b{\tau}$.  For observer $O'$  the same system has velocity $\bf{V'}$ for the center of mass and, by the same calculations, we obtain $ct'=b'{\tau}$, where $b'=\sqrt{{\bf{U'}}^2+c^2}$.  This shows that a unique (operational) measure of time is available to all observers.  Furthermore, $\tau$ differs from $t$ (respectively $t'$), by a constant scale factor.  Thus, all observers may uniquely define the  local-time of the center of mass for the system of particles (independent of their chosen inertial frame). We also obtain our third identity:
\beqn{\lb{id2}}
\frac{1}{c}\frac{d}{{dt}} \equiv \frac{1}{b}\frac{d}{{d\tau }}  \equiv \frac{1}{{{b_i}}}\frac{d}{{d{\tau _i}}} 
\eeqn 
Applying the above to ${\bf{x}}_i$ we see that:
\[
\frac{1}{c}\frac{{d{{\mathbf{x}}_i}}}{{dt}} \equiv \frac{1}{b}\frac{{d{{\mathbf{x}}_i}}}{{d\tau }} \equiv \frac{1}{{{b_i}}}\frac{{d{{\mathbf{x}}_i}}}{{d{\tau _i}}}.
\]
\begin{thm}If $O$ is observing any system of particles, there are two sets of global variables available: $({\bf{X}}, t)$ and $({\bf{X}}, \tau)$.  Use of $({\bf{X}}, t)$ provides a relative definition of time and a constant speed of light;  while use of $({\bf{X}}, \tau)$ provides a unique definition of time and a relative definition of the speed of light, with no upper bound.
\end{thm}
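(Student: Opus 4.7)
The plan is to verify the two parallel claims separately. First I would note that both coordinate systems are already well-defined in the preceding material: the position $\mathbf{X}$ is the Pryce canonical center of mass defined by (\rf{cem}), which $O$ can construct from the particle data in its own frame; the time $t$ is the clock reading of $O$; and $\tau$ is the global proper time of the composite system, introduced through $d\tau = (Mc^{2}/H)\,dt$. The identity $cdt = b\,d\tau$, with $b = \sqrt{\mathbf{U}^{2} + c^{2}}$ constant along the worldline of the center of mass, shows that $t\mapsto \tau$ is a smooth monotone reparametrization, so $(\mathbf{X},t)$ and $(\mathbf{X},\tau)$ are genuine alternative global descriptions.

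For the $(\mathbf{X},t)$ half of the statement, the relativity of $t$ is immediate from the usual Lorentz transformation exhibited in (1.2): a second observer $O'$ with velocity $\mathbf{v}$ measures $t' = \gamma(\mathbf{v})(t - \mathbf{X}\cdot\mathbf{v}/c^{2}) \neq t$ in general, so the time coordinate is observer-dependent. Constancy of the speed of light is postulate (\textbf{2}), which is preserved here because $t$ is precisely the variable in which postulate (\textbf{2}) was originally formulated. I would simply record these two facts.

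For the $(\mathbf{X},\tau)$ half, I would argue in two steps. The uniqueness of $\tau$: since $M$ and $H$ transform so that $Mc^{2}/H$ is the invariant inverse Lorentz factor $\gamma(\mathbf{V})^{-1}$ of the canonical center of mass, $d\tau$ is a Lorentz invariant one-form along the worldline of $\mathbf{X}$. Equivalently, the paper has already derived $ct = b\tau$ for $O$ and $ct' = b'\tau$ for $O'$ with the \emph{same} $\tau$, so every inertial observer can reconstruct the common parameter $\tau$ from their own clock by rescaling by the (constant) factor $c/b$. The speed of light in $(\mathbf{X},\tau)$ coordinates: a photon satisfies $|d\mathbf{X}_{\text{photon}}/dt| = c$ in the $t$-description, and substituting $cdt = b\,d\tau$ into this gives
\[
\left|\frac{d\mathbf{X}_{\text{photon}}}{d\tau}\right| = b = \sqrt{\mathbf{U}^{2} + c^{2}},
\]
which manifestly depends on the observer through $\mathbf{U} = \gamma(\mathbf{V})\mathbf{V}$, and is unbounded as $\|\mathbf{V}\|\to c$. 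This produces both assertions of the second half.

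The main obstacle I expect is the word \emph{unique} applied to $\tau$. It is not that every observer measures the same numerical elapsed $\tau$ on their own wristwatch (they do not), but that there is a single invariant parameter on the system's worldline to which each observer's clock is related by the constant rescaling $c/b$. I would make this distinction explicit, pointing out that $\tau$ plays the role of an intrinsic proper time for the composite system (legitimized by the fact that $\mathbf{X}$ is a canonical, globally defined representative), so the construction does not fall afoul of the two-particle time problem of section 1, nor of the No-Interaction Theorem, because $\mathbf{X}$ is explicitly not the spatial part of a four-vector.
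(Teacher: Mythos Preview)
Your proposal is correct and largely parallel to the paper's own argument: the first half is dismissed as clear, and the second half rests on the already-established relations $ct=b\tau$ and $ct'=b'\tau$ with the \emph{same} $\tau$. Where you and the paper diverge is in how the relativity of the speed of light is made precise. You argue that $b=\sqrt{\mathbf{U}^{2}+c^{2}}$ is observer-dependent because $\mathbf{U}=\gamma(\mathbf{V})\mathbf{V}$ varies with the frame, and unbounded as $\|\mathbf{V}\|\to c$; this is a clean kinematic argument and suffices for the literal statement. The paper instead substitutes $t'=\tfrac{b'}{c}\tau$ and $t=\tfrac{b}{c}\tau$ into the Lorentz time transformation $t'=\gamma(\mathbf{W})(t-\mathbf{X}\cdot\mathbf{W}/c^{2})$ and, using $\mathbf{U}=\mathbf{X}/\tau$, reads off the explicit law
\[
b'=\gamma(\mathbf{W})\bigl(b-\mathbf{U}\cdot\mathbf{W}/c\bigr),\qquad b=\gamma(\mathbf{W})\bigl(b'+\mathbf{U}'\cdot\mathbf{W}/c\bigr).
\]
The paper frames this not merely as ``$b$ is relative'' but as the verification that Einstein's first postulate survives the passage to $(\mathbf{X},\tau)$ variables: each observer can reconstruct the other's $b$ from their relative velocity, so the physical content is shared. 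Your argument establishes observer-dependence and unboundedness of $b$ but does not exhibit this transformation law; since the paper treats that law as the point of the proof, you should add the one-line derivation to match its intent.
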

\begin{proof}The first part is clear.  To prove the second statement, from above, we see that any other observer $O'$ investigating the same system of particles also has two sets of global variables available: $({\bf{X'}}, t')$ with a constant speed of light and  $({\bf{X'}}, \tau)$ with $b'$ relative.
We are done if we can show that Einstein's first postulate holds.

Let ${\bf{W}}$ be the relative velocity between observer $O$ and $O'$.  Since $\tau$ is the same for both we only need the  relationship between the two scale factors $b$ and $b'$ to satisfy the first postulate.  Starting with $t' = \tf{b'}{c}\tau  = \gamma \left( {\mathbf{W}} \right)\left( {\tf{b}{c}\tau  - {{{\mathbf{X}} \cdot {\mathbf{V}}} \mathord{\left/
 {\vphantom {{{\mathbf{X}} \cdot {\mathbf{W}}} {{c^2}}}} \right.
 \kern-\nulldelimiterspace} {{c^2}}}} \right)$, we see that, since ${\mathbf{U}} = \left( {{{\mathbf{X}} \mathord{\left/
 {\vphantom {{\mathbf{X}} \tau }} \right.
 \kern-\nulldelimiterspace} \tau }} \right)$, we get: 
\[
b' = \gamma \left( {\mathbf{W}} \right)\left[ {b - \left( {{{\mathbf{X}} \mathord{\left/
 {\vphantom {{\mathbf{X}} \tau }} \right.
 \kern-\nulldelimiterspace} \tau }} \right) \cdot \left( {{{\mathbf{W}} \mathord{\left/
 {\vphantom {{\mathbf{W}} {{c}}}} \right.
 \kern-\nulldelimiterspace} {{c}}}} \right)} \right] = \gamma \left( {\mathbf{W}} \right)\left( {b - {{{\mathbf{U}} \cdot {\mathbf{W}}} \mathord{\left/
 {\vphantom {{{\mathbf{U}} \cdot {\mathbf{W}}} {{c}}}} \right.
 \kern-\nulldelimiterspace} {{c}}}} \right).
\]
A similar calculation shows that  $b = \gamma ({\mathbf{W}})\left( {b' + {{{\mathbf{U'}} \cdot {\mathbf{W}}} \mathord{\left/
 {\vphantom {{{\mathbf{U'}} \cdot {\mathbf{W}}} {{c}}}} \right.
 \kern-\nulldelimiterspace} {{c}}}} \right)$.  This shows that each observer can have direct access to all information available to any other observer once they know their relative velocities.  Thus the first postulate of Einstein is satisfied at the global level.
\end{proof}
\begin{cor} The two global sets of variables produce mathematically equivalent  theories, but do not produce physically equivalent  theories. 
\end{cor}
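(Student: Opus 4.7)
The plan is to separate the corollary into its two assertions and treat each in turn. Mathematical equivalence follows almost directly from the third identity (\ref{id2}); the interesting content is the physical inequivalence, which requires exhibiting a concrete sense in which observable quantities are interpreted differently in the two descriptions.

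For the mathematical part, I would first note that the change of variable $ct=b\tau$, with $b=\sqrt{\mathbf{U}^2+c^2}$ constant along the center-of-mass trajectory, is a smooth bijection of $\mathbb{R}$. Combined with the identity $\mathbf{w}_i/c = \mathbf{u}_i/b_i$ from (\ref{id1a}) and the fact (established in the section) that $\mathbf{p}_i = m_i\mathbf{u}_i$, the configuration variables $\mathbf{x}_i$ and the canonical momenta $\mathbf{p}_i$ are preserved; the contact transformation leaves phase space invariant. Any dynamical law, Hamiltonian, or field equation written in the variables $(\mathbf{X},t)$ therefore admits a literal transcription in $(\mathbf{X},\tau)$ obtained by replacing $\frac{1}{c}\frac{d}{dt}$ by $\frac{1}{b}\frac{d}{d\tau}$ throughout, and the transcribed system has exactly the same solution set. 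In this sense the two formulations are mathematically indistinguishable as systems of differential equations.

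For the physical inequivalence, I would contrast how the pair (time variable, light speed) transforms under a change of inertial observer. In the $(\mathbf{X},t)$ description, Einstein's second postulate makes $c$ an absolute constant while $t$ is observer-dependent through the Lorentz group. In the $(\mathbf{X},\tau)$ description, the theorem just proved shows that $\tau$ is the \emph{same} for all observers, whereas the scale factor obeys $b' = \gamma(\mathbf{W})(b - \mathbf{U}\cdot\mathbf{W}/c)$ and is therefore observer-dependent. Hence the role of ``absolute'' and ``relative'' is interchanged between the two theories. Since clock readings and propagation speeds are operational quantities rather than formal ones, this swap translates into genuinely different answers to questions such as ``is a given clock universal?'' or ``does the propagation speed of an electromagnetic disturbance depend on the observer?''. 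A single experiment designed around such an observable cannot simultaneously be described equivalently by both frameworks, which establishes physical inequivalence.

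The principal obstacle is not computational but conceptual: one must formulate ``physical equivalence'' precisely enough to distinguish it from mathematical equivalence when the underlying equations are related by an invertible change of variable. My plan is to anchor the distinction in the inter-observer transformation laws, since this is where operational meaning enters, and to make the contrast sharp by pairing the invariance of $c$ against the invariance of $\tau$ as mutually exclusive physical statements. Once this framing is in place, the corollary reduces to the observations already available from (\ref{id2}) and from the $b,b'$ relation derived in the proof of the preceding theorem.
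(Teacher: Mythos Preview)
Your argument tracks the paper's own reasoning closely: the paper offers no separate proof of this corollary, treating it as immediate from Theorem~2.1, and your two halves (mathematical equivalence via the contact identity and phase-space invariance; physical inequivalence via the swapped roles of $(c,t)$ versus $(b,\tau)$ under change of observer) are exactly the ingredients already on the table.

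One correction is needed in your final paragraph. You write that ``a single experiment designed around such an observable cannot simultaneously be described equivalently by both frameworks,'' but this contradicts the mathematical equivalence you just established: theories related by an invertible change of variable yield identical numerical predictions for every experiment. The physical inequivalence here is interpretive, not predictive---it concerns which quantity one \emph{declares} invariant, not which outcomes one computes. The paper makes this explicit in its conclusion, noting that $\beta=\mathbf{w}/c\equiv\mathbf{u}/b$ renders such measurements ``totally ambiguous.'' Your penultimate sentence (different answers to ``is a given clock universal?'')\ already captures the correct sense; drop the overreach that follows it.
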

\begin{thm} The special theory of Einstein holds for any many-particle system and is independent of the Minkowski postulate. 
\end{thm}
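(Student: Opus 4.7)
The plan is to use the reduction from $n$-particle variables to the single global pair $(\mathbf{X}, t)$ supplied by the canonical center-of-mass construction (\rf{cem}), and then verify Einstein's two postulates directly for this pair. The maneuver bypasses both the two-particle clock obstruction exhibited in (1.2) and the ``spatial part of a four-vector'' hypothesis that powers the No-Interaction Theorem, so Minkowski's postulate never enters.

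First, I would invoke (\rf{cem}) to attach to each inertial observer $O$ a unique canonical center-of-mass $\mathbf{X}$, with velocity $\mathbf{V} = d\mathbf{X}/dt$ and global invariants $M, H, \mathbf{P}, \mathbf{S}$ satisfying $H = Mc^2 \gamma(\mathbf{V}) = \sqrt{c^2 \mathbf{P}^2 + M^2 c^4}$. This collapses the many-particle kinematics into a single pair $(\mathbf{X}, t)$ for $O$ and, analogously, a pair $(\mathbf{X}', t')$ for any other observer $O'$, with only internal (relative) coordinates and momenta remaining as extra data.

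Second, I would verify Einstein's first postulate. The pathology in (1.2) came from demanding the Lorentz time-transformation simultaneously for $\mathbf{x}_1$ and $\mathbf{x}_2$; with a single global position $\mathbf{X}$ there is only the one equation $t' = \gamma(\mathbf{W})(t - \mathbf{X}\cdot\mathbf{W}/c^2)$, which is self-consistent. The preceding theorem already established the compatible boost laws for the scale factors $b$ and $b'$, so every observer has access to all data once the relative velocity $\mathbf{W}$ is known. The second postulate is then automatic in $(\mathbf{X}, t)$: the constant $c$ enters all observers' transformations through $\gamma(\mathbf{V})$ and the position-time boost in the same way, and Maxwell's equations remain Lorentz-covariant by \cite{22}. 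In the dual $(\mathbf{X}, \tau)$ picture the corresponding assertion is the invariance of $\tau$ together with the Lorentz-type relation $b' = \gamma(\mathbf{W})(b - \mathbf{U}\cdot\mathbf{W}/c)$ already derived.

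The main obstacle, and the step requiring the most care, is confirming that the result genuinely avoids Minkowski's postulate rather than smuggling it back in. By Pryce's analysis, the canonical $\mathbf{X}$ of (\rf{cem}) cannot be assembled as the three-vector part of a four-vector, so hypothesis (3) of the No-Interaction Theorem fails and that theorem imposes no obstruction to interacting dynamics; the Bakamjian–Thomas construction \cite{17} then furnishes a Hamiltonian realization of the Poincaré group on the canonical pair $(\mathbf{X}, \mathbf{P})$ together with the internal variables, permitting non-trivial interactions. Taking these observations together, Einstein's two postulates hold for any many-particle system, and no step of the argument invokes Minkowski's third postulate, which is the desired conclusion.
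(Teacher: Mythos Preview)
Your proposal is correct and follows essentially the same route as the paper. The paper does not give a separate formal proof of this theorem; it treats the statement as a summary of the preceding construction and Theorem 2.1, immediately remarking afterward that ``we do not require the particle coordinates to transform as four-vectors. Thus, the no-interaction theorem does not apply.'' Your write-up simply makes explicit the steps the paper leaves implicit: the reduction to the single global pair $(\mathbf{X},t)$ via (\ref{cem}), the resolution of the clock obstruction (1.2), the appeal to the boost law for $b,b'$ established in Theorem 2.1, and the observation that Pryce's canonical $\mathbf{X}$ fails hypothesis (3) of the No-Interaction Theorem so that Minkowski's postulate is never invoked.
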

This result is fundamental to our approach, since we do not require  the particle coordinates to transform as four-vectors.  Thus, the no-interaction theorem does not apply.  In the following section, we study the dynamics of the system.
\begin{rem}
This distinction may also prove important in the future, because there continues to appear  research in cosmology, applied physics and engineering, suggesting that the constant $c$ is  not an upper bound in all cases (see for example  \cite{36, 37, 38, 39}).

For many experiments (e. g., high energy particle studies) the center of mass is the natural frame of choice.  In this case, $t=\tau$ and one has a constant speed of light for all events associated with the global system of (interacting) particles.  (However, from  equation (2.3)  individual particles can still have velocities much larger than $c$.) 
\end{rem}

\section{General Dynamics}
In this section, we focus on the general dynamics of our system of particles.  The study of external and internal dynamics will be accomplished in Part II.
\subsection{Poincar\' e algebra}
If we let ${\bf L}$  be the generator of pure Lorentz transformations (boost) and define the total angular momentum ${\bf J}$ by   
\[ 
{\bf J} = \sum\limits_{i = 1}^n {{\bf x}_i  \times {\bf p}_i } ,   
\]  
we obtain the following Poisson Bracket relations characteristic of the Lie algebra for
the Poincar\' e group, when we use the time $t$ of our observer $O$:   
\beqn{\lb{cr}} 
\begin{gathered}
  \frac{{d{\mathbf{P}}}}{{dt}} = \left\{ {H,{\mathbf{P}}} \right\} = 0\quad \quad {\text{   }}\frac{{d{\mathbf{J}}}}{{dt}} = \left\{ {H,{\mathbf{J}}} \right\} = 0\quad \quad {\text{ }}\left\{ {{P_i},{P_j}} \right\} = 0 \hfill \\
  \left\{ {{J_i},{P_j}} \right\} = {\varepsilon _{ijk}}{P_k}\quad \quad {\text{   }}\left\{ {{J_i},{J_j}} \right\} = {\varepsilon _{ijk}}{J_k}\quad \quad {\text{ }}\left\{ {{J_i},{L_j}} \right\} = {\varepsilon _{ijk}}{L_k} \hfill \\
  \frac{{d{\mathbf{L}}}}{{dt}} = \left\{ {H,{\mathbf{L}}} \right\} =  - {\mathbf{P}}\quad \quad \left\{ {{P_i},{L_j}} \right\} = {\delta _{ij}}\frac{H}{{{c^2}}}\quad \quad \left\{ {{L_i},{L_j}} \right\} = {\varepsilon _{ijk}}\frac{{{J_k}}}{{{c^2}}}. \hfill \\ 
\end{gathered} 
\eeqn
It is easy to see that $M$ commutes with $H$, ${\bf P}$,  and ${\bf J}$, and to show
that $M$ commutes with ${\bf L}$. 

\subsubsection{Canonical Hamiltonian}
If we treat the system of particles as a single entity, then (${\bf{X}},{\bf{P}}$) are the natural phase space variables for the external system dynamics.  In this case, if $W({\bf{X}},{\bf{P}})$ is a dynamical parameter in $\bf{X}$ and $\bf{P}$,  the time evolution of $W$ is defined by:
\beqn
\frac{{dW}}{{dt}} = \left\{ {H,W} \right\} = \frac{{\partial H}}{{\partial {\mathbf{P}}}} \cdot \frac{{\partial W}}{{\partial {\mathbf{X}}}} - \frac{{\partial H}}{{\partial {\mathbf{X}}}} \cdot \frac{{\partial W}}{{\partial {\mathbf{P}}}}.
\eeqn
In order to represent the dynamics using the  proper time of the system, we use the representation $d\tau = ({{Mc^2 } \mathord{\left/ {\vphantom {{mc^2 } H}} \right. \kern-\nulldelimiterspace} H})dt$,
so that:
\[
\frac{{dW}}{{d\tau }} =\frac{{dt}}{{d\tau }} \frac{{dW}}{{dt}} = \frac{H}{{M{c^2}}}\left\{ {H,W} \right\} = \left( {\frac{H}{{M{c^2}}}\frac{{\partial H}}{{\partial {\mathbf{P}}}}} \right) \cdot \frac{{\partial W}}{{\partial {\mathbf{X}}}} - \left( {\frac{H}{{M{c^2}}}\frac{{\partial H}}{{\partial {\mathbf{X}}}}} \right) \cdot \frac{{\partial W}}{{\partial {\mathbf{P}}}}.
\]
The ratio ${H \mathord{\left/ {\vphantom {H {M{c^2}}}} \right.  \kern-\nulldelimiterspace} {M{c^2}}}$ is constant and $Mc^2$ is a well-defined (invariant) for the system, so we can determine the canonical Hamiltonian $K$, related to $\tau$ by:
\[
\left\{ {K,W} \right\} = \frac{H}{{Mc^2 }}\left\{ {H,W} \right\},\quad \left. K \right|_{{\mathbf{P}} = 0}  = \left. H \right|_{{\mathbf{P}} = 0}  = Mc^2. 
\]
In this case: 
\[
\begin{gathered}
  \left\{ {K,W} \right\} = \left[ {\frac{H}{{Mc^2 }}\frac{{\partial H}}{{\partial {\mathbf{P}}}}} \right]\frac{{\partial W}}{{\partial {\mathbf{X}}}} - \left[ {\frac{H}
{{Mc^2 }}\frac{{\partial H}}{{\partial {\mathbf{X}}}}} \right]\frac{{\partial W}}
{{\partial {\mathbf{P}}}} \hfill \\
  {\text{          }} = \frac{\partial }{{\partial {\mathbf{P}}}}\left[ {\frac{{H^2 }}
{{2Mc^2 }} + a} \right]\frac{{\partial W}}{{\partial {\mathbf{X}}}} - \frac{\partial }
{{\partial {\mathbf{X}}}}\left[ {\frac{{H^2 }}{{2Mc^2 }} + a'} \right]\frac{{\partial W}}
{{\partial {\mathbf{P}}}}, \hfill \\ 
\end{gathered} 
\]
we get that $a = a' = \tfrac{1}{2}Mc^2$, so that
\beqn{\lb{H}}
K = \frac{{{H^2}}}{{2M{c^2}}} + \frac{{M{c^2}}}{2} = \frac{{{{\mathbf{P}}^2}}}{{2M}} + M{c^2},\quad {\text{and}}\quad \frac{{dW}}{{d\tau }} = \left\{ {K,W} \right\}.
\eeqn 					 
Thus, $K$ looks like the standard (Newtonian) Hamiltonian except for the $M{c^2}$ term.
\subsubsection{Proper time Poincar\' e algebra}
We can use the same definitions for ${\bf P}$, ${\bf J}$, and ${\bf L}$  to
obtain our new commutation relations: 
\beqn\lb{cr'}
\begin{gathered}
  \frac{{d{\mathbf{P}}}}{{d\tau }} = \left\{ {K,{\mathbf{P}}} \right\} = {\mathbf{0}},\quad \;\;\;\;\frac{{d{\mathbf{J}}}}{{d\tau }} = \left\{ {K,{\mathbf{J}}} \right\} = {\mathbf{0}},\quad \;\left\{ {{P_i},{P_j}} \right\} = 0, \hfill \\
  \;\left\{ {{J_i},{P_j}} \right\} = {\varepsilon _{ijk}}{P_k},\quad \quad \left\{ {{J_i},{J_j}} \right\} = {\varepsilon _{ijk}}{J_k},\quad \quad \left\{ {{J_i},{L_j}} \right\} = {\varepsilon _{ijk}}{L_k}, \hfill \\
  \frac{{d{\mathbf{L}}}}{{d\tau }} = \left\{ {K,{\mathbf{L}}} \right\} = \tfrac{{ - H}}{{M{c^2}}}{\mathbf{P}},\quad \left\{ {{P_i},{L_j}} \right\} = \tfrac{{ - H}}{{{c^2}}}{\delta _{ij}},\quad \left\{ {{L_i},{L_j}} \right\} = \tfrac{{ - {J_k}}}{{{c^2}}}{\varepsilon _{ijk}}. \hfill \\ 
\end{gathered} 
\eeqn
We see again that, except for a (constant) change of scale, we obtain the same Lie algebra for the  Poincar\'{e} group.   Thus, the replacement of $t$ with $\tau$ still produces a relativistic theory.  We will explicitly construct and discuss the corresponding Lorentz transformations that fix $\tau$ in Part II.
\subsubsection{The $t \to \tau$ contact group}
The mapping between $t$ and  $\tau$ is a member of the family of contact groups, often used in celestial mechanics (see \cite{26}).  Contact transformations are sometimes called tangency transformations in mechanics, because they leave invariant the tangent at the point of contact. In what follows (from our identities) we use ${\raise0.5ex\hbox{$\scriptstyle {\mathbf{w}}$}
\kern-0.1em/\kern-0.15em
\lower0.25ex\hbox{$\scriptstyle c$}}$ with ${\gamma ^{ - 1}}$ and ${\raise0.5ex\hbox{$\scriptstyle {\mathbf{u}}$}
\kern-0.1em/\kern-0.15em
\lower0.25ex\hbox{$\scriptstyle b$}}$ with $\gamma$. In this case, an explicit representation is easy:
\[
\begin{gathered}
  d\tau  = {\gamma ^{ - 1}}dt = {{\gamma '}^{ - 1}}dt' \Rightarrow  \hfill \\
  dt = \gamma d\tau {\text{  and }}dt' = \gamma 'd\tau {\text{ }} \hfill \\ 
\end{gathered} 
\]
These transformations are clearly invertible.  Since the observer frames are inertial, we have that $t=\g \tau$ and $t' =\g' \tau$.  Thus the transformation $t \to \tau$ induces the contact mapping of ${\bf C}^{-1}[\,t,\,\tau]: ( O, t)\, \to \,(O, \tau)$. (See \cite{27} pg. 1312, for the general case.)

Let observer $O'$ with time $t'$ observe the same system of particles.  From this frame the total Hamiltonian is $H'$.  One can also construct $\bf{P'}$ and $M'$, leading to the same form for the commutation relations as in (\rf{cr'}).

Let the contact maps from $( O, \tau)\, \to \,(O, t)$ and from $( O', \tau)\, \to \,(O', t')$ be denoted by ${\bf
C}[\,t,\,\tau]$ and ${\bf C}[\,t',\,\tau]$ respectively.   Let  ${\mcP}(O', O)$ be the Poincar\' e map from $O \to O'$.
\begin{thm}  The system of particles as seen by an observer at $O$ is related to that of an observer at $O'$ by the Zachary transformation: 
\beqn
\begin{gathered}
O'({\bf{X}' },\, \tau)={\mathcal{Z}}[O',O, \tau ]O({\bf{X} },\, \tau)   \hfill \\ 
={\bf C}[ \, \tau,\,t' ]{\mcP}(O',\,O){\bf C}^{-1}[ \,t,\,\tau ]O({\bf{X} },\, \tau).  \hfill \\ 
\end{gathered}
\eeqn
\end{thm}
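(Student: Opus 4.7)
The plan is to realize the Zachary transformation as an explicit three-step composition, where each step has already been set up in the excerpt, and then verify that the global parameter $\tau$ is the glue that makes the composition well-defined.

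First, I would recall that by the previous theorem any inertial observer has both representations $({\bf X},t)$ and $({\bf X},\tau)$ available, with $\tau$ the \emph{same} intrinsic proper time of the center-of-mass regardless of the observing frame (the theorem showed $t = (b/c)\tau$ for $O$ and $t' = (b'/c)\tau$ for $O'$, with the two scale factors $b,b'$ related by the Lorentz factor $\gamma({\bf W})$). This means that the domain of the contact maps ${\bf C}[t,\tau]$ and ${\bf C}[t',\tau]$ is consistent: the $\tau$ that appears on the right of ${\bf C}^{-1}[t,\tau]$ is the same $\tau$ that appears on the left of ${\bf C}[\tau,t']$.

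Second, I would apply ${\bf C}^{-1}[t,\tau]$ to $O({\bf X},\tau)$, converting the description parametrized by the invariant proper time $\tau$ into the observer-time description $({\bf X},t)$. Since the contact transformation acts by the constant rescaling $dt = \gamma\,d\tau$ and leaves the configuration variables untouched (by identity (\rf{id1a}) applied at the global level), it is invertible and preserves the phase-space structure. Third, I would apply the classical Poincaré map ${\mcP}(O',O)$ to the $({\bf X},t)$-data; this step is the standard one, made legitimate by the Poincaré algebra (\rf{cr}) that has been shown to hold in the observer-time representation of any inertial frame. Fourth, I would apply ${\bf C}[\tau,t']$ to convert the resulting $({\bf X}',t')$ description at $O'$ back into the $({\bf X}',\tau)$ description, again via the constant rescaling $d\tau = (\gamma')^{-1} dt'$.

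Finally, I would check compatibility: the output of the composition is parametrized by $\tau$ because $\tau$ is invariant across frames, so no ambiguity is introduced when the two contact maps are glued along $\mcP$. The Poincaré algebra (\rf{cr'}) in the $\tau$-parametrization at $O'$ then follows from that at $O$ because ${\bf C}$, ${\bf C}^{-1}$, and $\mcP$ each preserve the relevant Lie bracket relations up to the constant factor $H/Mc^2$, which is itself invariant for the system.

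The main obstacle, and the only substantive point, is to justify rigorously that the \emph{same symbol} $\tau$ can appear in both ${\bf C}^{-1}[t,\tau]$ and ${\bf C}[\tau,t']$. This is not automatic: one must appeal to the earlier theorem, which guarantees that the center-of-mass proper time is a global invariant and that the two scale factors $b$ and $b'$ carry all the frame-dependent information via $b' = \gamma({\bf W})(b - {\bf U}\cdot{\bf W}/c)$. Once this identification is in hand, the composition is a tautology and the Zachary transformation is well-defined.
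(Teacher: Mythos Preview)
Your proposal is correct and follows essentially the same approach as the paper: the paper's proof consists solely of drawing the commutative square with vertices $O({\bf X},t)$, $O'({\bf X'},t')$, $O({\bf X},\tau)$, $O'({\bf X'},\tau)$ and declaring that the diagram commutes, which is exactly the three-step composition you spell out. Your version is more explicit---in particular, your emphasis on why the \emph{same} $\tau$ may label both vertical arrows (via the invariance established in the preceding theorem and the relation $b'=\gamma({\bf W})(b-{\bf U}\cdot{\bf W}/c)$) fills in a point the paper leaves implicit---but the underlying argument is identical.
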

\begin{rem} The above transformation is named for our deceased colleague Woodford W. Zachary (see \cite{27}, equation (5.21)).  
\end{rem}
\begin{proof}The proof follows since the diagram below is
commutative.  
\[
 \begin{matrix} {O({\bf{X} },\, t)}  &  {\rm  {}} & { \xrightarrow{{\mathcal{P}}}} 
&  {\rm  {}} & O'({\bf{X}' },\, t') \cr &  {\rm  {}}&  {\rm  {}}&  {\rm  {}}& 
{\rm  {}}&  {\rm  {}}\cr &  {\rm  {}}&  {\rm  {}}&  {\rm  {}}&  {\rm  {}}&  {\rm 
{}}\cr {{{\bf C}^{-1}[\,t,\,\tau]}} &   \Bigg\uparrow &   {\rm  {}} & 
\Bigg\downarrow & {{\bf C}[\,t',\,\tau]}\cr &  {\rm  {}}&  {\rm  {}}&  {\rm  {}}& 
{\rm  {}}&  {\rm  {}}\cr &  {\rm  {}}&  {\rm  {}}&  {\rm  {}}&  {\rm  {}}&  {\rm 
{}}\cr O({\bf{X} },\, \tau)  &  {\rm  {}} &  \xrightarrow{\mcZ} {} &  {\rm  {}} &
O'({\bf{X}' },\, \tau)
\end{matrix} 
\]
\end{proof} 
Since $K$ does not depend on the center-of-mass position ${\bf X}$, it is easy to see
that  
\beqn{\lb{X}} 
{\mathbf{U}}=\frac{{d{\mathbf{X}}}}{{d\tau }} = \frac{{\partial K}}{{\partial {\mathbf{P}}}} = \frac{{\mathbf{P}}}{M} = \frac{1}{M}\sum\limits_{i = 1}^n {{{\mathbf{p}}_i}}  = \frac{1}{M}\sum\limits_{i = 1}^n {{m_i}{{\mathbf{u}}_i}}  = \frac{1}{M}\sum\limits_{i = 1}^n {{m_i}\frac{{d{{\mathbf{x}}_i}}}{{d\tau_i }}}. \; 
\eeqn
For the $O'$ observer, the same calculation leads to:
\beqn{\lb{X'}}
\quad \quad  {\mathbf{U}}'=\frac{{d{\mathbf{X'}}}}{{d\tau }} = \frac{{\partial K}}{{\partial {\mathbf{P'}}}} = \frac{{{\mathbf{P'}}}}{{M'}} = \frac{1}{{M'}}\sum\limits_{i = 1}^n {{{{\mathbf{p}'_i}}}}  = \frac{1}{{M'}}\sum\limits_{i = 1}^n {{{m'_i}}{{{\mathbf{u}'_i}}}}  = \frac{1}{{M'}}\sum\limits_{i = 1}^n {{{m'_i}}\frac{{d{{{\mathbf{x}'_i}}}}}{{d\tau_i }}}. 
\eeqn 
We now observe that
 \[
dt = \frac{{{H_i}}}{{{m_i}{c^2}}}d{\tau _i} = \frac{H}{{M{c^2}}}d\tau \quad  \Rightarrow \quad \frac{{{m_i}}}{M}\frac{d}{{d{\tau _i}}} = \frac{{{H_i}}}{H}\frac{d}{{d\tau }}.
 \]
Thus, we can replace ({\rf{X}}) and ({\rf{X'}}) by: 
\beqn
\frac{{d{\mathbf{X}}}}{{d\tau }} = \frac{{\partial K}}{{\partial {\mathbf{P}}}} = \frac{{{\mathbf{P}}}}{{M}} = \frac{1}{{M}}\sum\limits_{i = 1n}^n {{{{\mathbf{p}}}_i}}  = \frac{1}{{M}}\sum\limits_{i = 1}^n {{{m_i}}{{{\mathbf{u}}_i}}}  = \frac{1}{H}\sum\limits_{i = 1}^n {{H_i}\frac{{d{{\mathbf{x}}_i}}}{{d\tau }}} 
\eeqn
and
\beqn 
\frac{{d{\mathbf{X'}}}}{{d\tau }} = \frac{{\partial K}}{{\partial {\mathbf{P'}}}} = \frac{{{\mathbf{P'}}}}{{M'}} = \frac{1}{{M'}}\sum\limits_{i = 1}^n {{{{\mathbf{p'}}}_i}}  = \frac{1}{{M'}}\sum\limits_{i = 1}^n {{{m'_i}}{{{\mathbf{u}'_i}}}}  = \frac{1}{{H'}}\sum\limits_{i = 1n}^n {{{H'_i}}\frac{{d{{{\mathbf{x}'_i}}}}}{{d\tau }}}. 
\eeqn
Since the $H_i$ (respectively $H'_i$) do not depend on $\tau$, we can integrate both equations to get:
\[
{\mathbf{X}} = \frac{1}{H}\sum\limits_{i = 1n}^n {{H_i}{{\mathbf{x}}_i} + {\mathbf{Y}}} \quad {\text{and}}\quad {\mathbf{X'}} = \frac{1}{{H'}}\sum\limits_{i = 1n}^n {{{H'_i}}{\mathbf{x'_i}} + {\mathbf{Y'}}} ,
\]
where ${\mathbf{Y}}$ and ${\mathbf{Y'}}$ are constants of integration.  (This shows that the canonical Hamiltonian determines the canonical position up to a constant.)

To see directly that the clock transformation is also a canonical change of variables (time), which leaves phase space invariant, we have the following theorem. 
\begin{thm}
There exists a function $S = S\left( {{\bf{X}}  ,\; {\bf{P}}  ,\;\tau } \right)$ such that
\[
{\mathbf{P}} \cdot d{\mathbf{X}} - Hdt \equiv {\mathbf{P}} \cdot d{\mathbf{X}} - Kd\tau  + dS. 
\]
\end{thm}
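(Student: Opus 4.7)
The plan is to exhibit $S$ explicitly, since the claimed identity is equivalent to $dS = K\, d\tau - H\, dt$, and the right-hand side can be simplified using only the structure already developed.

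First, to identify the correct candidate, I would evaluate $K\, d\tau - H\, dt$ along a trajectory, where $\mathbf{P}$ is conserved and $dt = (H/Mc^2)\, d\tau$. Substituting $K = H^2/(2Mc^2) + Mc^2/2$ from (\ref{H}) and using the system identity $H^2 - M^2c^4 = c^2\mathbf{P}^2$ gives the telescoping simplification
\[
K\, d\tau - H\, dt = \left(K - \frac{H^2}{Mc^2}\right) d\tau = -\frac{\mathbf{P}^2}{2M}\, d\tau,
\]
and integrating in $\tau$ (with $\mathbf{P}$ constant) yields the natural candidate
\[
S(\mathbf{X}, \mathbf{P}, \tau) = -\frac{\mathbf{P}^2\, \tau}{2M},
\]
independent of $\mathbf{X}$.

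Next I would promote this to an identity of one-forms on the full extended phase space $(\mathbf{X}, \mathbf{P}, \tau)$. Treating $t$ as the global function $t = (H/Mc^2)\, \tau$, I would differentiate to obtain $dt = (H/Mc^2)\, d\tau + (\tau/(Mc^2))\, dH$ with $dH = (c^2\mathbf{P}/H)\cdot d\mathbf{P}$, so that $H\, dt$ picks up an extra cross term $(\tau\mathbf{P}/M)\cdot d\mathbf{P}$ beyond the trajectory computation. This cross term matches precisely $-\partial S/\partial\mathbf{P}$ for the candidate above, while the $d\tau$ coefficients match $-\partial S/\partial\tau$. A mixed-partial check $\partial^2 S/(\partial\tau\,\partial\mathbf{P}) = \partial^2 S/(\partial\mathbf{P}\,\partial\tau) = -\mathbf{P}/M$ is automatic, and the $\mathbf{X}$-component is zero on both sides.

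The main obstacle is conceptual rather than computational: one must legitimize the global representation $t = (H/Mc^2)\tau$ as an honest function on extended phase space, not merely a relation along a fixed trajectory. This rests on two facts established earlier, namely that $M$ is a system invariant (shown to commute with $H$, $\mathbf{P}$, $\mathbf{J}$, and $\mathbf{L}$) and that $H$ depends on the phase variables only through $\mathbf{P}$, so no hidden $\mathbf{X}$-dependence can enter $dt$ and the integration of $dt/d\tau = H/Mc^2$ from $\tau=0$ is unambiguous and global. Once this is secured, the identity of one-forms follows by the short computation above.
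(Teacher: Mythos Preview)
Your candidate $S = -\tfrac{\mathbf{P}^2}{2M}\,\tau$ is exactly the paper's choice $S = (Mc^2 - K)\tau$, since $K = \tfrac{\mathbf{P}^2}{2M} + Mc^2$; so at the level of the function exhibited and the core verification, you and the paper agree. The paper's argument is shorter: it simply invokes that $Mc^2$ and $K$ are conserved, writes $dS = (Mc^2 - K)\,d\tau$, and checks $K\,d\tau - dS = (2K - Mc^2)\,d\tau = \tfrac{H^2}{Mc^2}\,d\tau = H\,dt$ along the motion. You go further by promoting the statement to an identity of one-forms on the extended phase space $(\mathbf{X},\mathbf{P},\tau)$, tracking the $d\mathbf{P}$ contribution that arises when $t = (H/Mc^2)\tau$ is differentiated as a phase-space function and matching it against $\partial S/\partial\mathbf{P}$. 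That extra layer is more rigorous and clarifies in what sense $S$ is genuinely a function of $(\mathbf{X},\mathbf{P},\tau)$ rather than just a quantity defined along a fixed trajectory; the paper does not attempt this and is content with the on-shell identity where $\mathbf{P}$, $H$, $K$ are constant.
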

\begin{proof}
Set $S = [Mc^2  - K]\tau$.   An easy calculation, using the fact that both $Mc^2$ and $K$ are conserved quantities, shows that $dS=[Mc^2  - K]d{\tau}$.  An additional easy calculation gives the  result.
\end{proof}
We note that
\[
\sum\limits_{i = 1}^n {\left[ {{{\mathbf{p}}_i} \cdot d{{\mathbf{x}}_i} - {H_i}dt} \right]}  = \sum\limits_{i = 1}^n {{{\mathbf{p}}_i} \cdot d{{\mathbf{x}}_i}}  - \sum\limits_{i = 1}^n {{H_i}dt}  = \sum\limits_{i = 1}^n {{{\mathbf{p}}_i} \cdot d{{\mathbf{x}}_i}}  - Hdt.
\]
This result  and $dS=[Mc^2  - K]d{\tau}$ is sufficient to justify the following:
\begin{cor} 
There exists a function $S = S\left( {\{ {\bf{x}}_i \} ,\;\{ {\bf{p}}_i \} ,\;\tau } \right)$ such that
\[
\sum\limits_{i = 1}^n {{{\mathbf{p}}_i} \cdot d{{\mathbf{x}}_i}}  - Hdt \equiv \sum\limits_{i = 1}^n {{{\mathbf{p}}_i} \cdot d{{\mathbf{x}}_i}}  - Kd\tau  + dS.
\]
\end{cor}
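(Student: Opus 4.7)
The plan is to reduce this corollary directly to the preceding theorem, using only the additive structure $H = \sum_{i=1}^n H_i$ and nothing new about $S$. I would take $S = [Mc^2 - K]\tau$ exactly as in the theorem; because $H$, $M$, and $K$ are functions of the underlying particle variables $\{\mathbf{x}_i, \mathbf{p}_i\}$, this automatically defines a function $S(\{\mathbf{x}_i\}, \{\mathbf{p}_i\}, \tau)$ of the required arguments.

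The first step is to extract from the preceding theorem the scalar identity hidden in it. Since $\mathbf{P} \cdot d\mathbf{X}$ appears on both sides of
\[
\mathbf{P} \cdot d\mathbf{X} - H\,dt \equiv \mathbf{P} \cdot d\mathbf{X} - K\,d\tau + dS,
\]
it cancels, leaving the one-form identity $-H\,dt \equiv -K\,d\tau + dS$ (equivalently $H\,dt \equiv K\,d\tau - dS$). This one-form identity is the substantive content of the theorem, and it is what I will splice into the many-particle Poincar\'e-Cartan form.

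The second step is the identity displayed immediately above the corollary,
\[
\sum_{i=1}^n \bigl[\mathbf{p}_i \cdot d\mathbf{x}_i - H_i\,dt\bigr] \;=\; \sum_{i=1}^n \mathbf{p}_i \cdot d\mathbf{x}_i \;-\; H\,dt,
\]
which is immediate from $H = \sum_i H_i$ and the fact that all particle clocks have been synchronized to $O$'s clock $t$ in this sum. Adding $\sum_i \mathbf{p}_i \cdot d\mathbf{x}_i$ to both sides of the scalar identity from Step~1 and using this display gives precisely
\[
\sum_{i=1}^n \mathbf{p}_i \cdot d\mathbf{x}_i - H\,dt \;\equiv\; \sum_{i=1}^n \mathbf{p}_i \cdot d\mathbf{x}_i - K\,d\tau + dS,
\]
which is the corollary.

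There is essentially no obstacle: the corollary is a purely formal enlargement of the theorem from the center-of-mass variables $(\mathbf{X}, \mathbf{P})$ to the full particle set $(\{\mathbf{x}_i\}, \{\mathbf{p}_i\})$, and the only thing being used beyond the theorem is the linearity $H = \sum_i H_i$. If anything needs care, it is only the observation that $S = [Mc^2 - K]\tau$ remains a valid generating function when reinterpreted as a function on the particle phase space, since $M$ and $K$ (being conserved) still satisfy $dS = [Mc^2 - K]\,d\tau$ along trajectories of the full $n$-particle system.
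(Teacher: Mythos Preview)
Your proposal is correct and follows essentially the same route as the paper: the paper's justification, given just before the corollary, consists precisely of the additive identity $\sum_i[\mathbf{p}_i\cdot d\mathbf{x}_i - H_i\,dt] = \sum_i \mathbf{p}_i\cdot d\mathbf{x}_i - H\,dt$ together with the formula $dS = [Mc^2 - K]\,d\tau$ inherited from the theorem. Your extra step of cancelling $\mathbf{P}\cdot d\mathbf{X}$ from both sides of the theorem to isolate the scalar identity $-H\,dt \equiv -K\,d\tau + dS$ is a clean way to phrase the same observation.
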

\begin{Def} A theory is said to be Einsteinian if at least one representation exists, which satisfies the two postulates of the special theory. 
\end{Def}
\begin{thm} Any closed system of interacting particles is Einsteinian and independent of the Minkowski postulate. Furthermore, there always exists two  distinct sets of  inertial frame coordinates  for each observer,  to describe each particle in the system and the system as a whole. The following holds: 
\begin{enumerate}
\item In one frame, the speed of light is an invariant upper bound and time is relative, while in the other, time is invariant and the speed of light $b$, is relative with no upper bound.
\item For the whole system and for each particle, the equations of motion are mathematically equivalent.
\end{enumerate}
\end{thm}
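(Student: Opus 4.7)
The plan is to present this theorem as a synthesis of the earlier results, so I would start by organizing what has already been proved. For the Einsteinian property, I would invoke Theorem 2.3 (together with the preceding Definition): the $({\bf{X}}, t)$ representation was already shown to satisfy Einstein's two postulates independently of the Minkowski postulate, so at least one representation of the required type exists and the system is Einsteinian by definition. For the existence of two coordinate sets, I would appeal to the identities (\rf{id1}), (\rf{id1a}) and (\rf{id2}), which define an invertible contact map $t \leftrightarrow \tau$ (and particle-wise $t \leftrightarrow \tau_i$) leaving configuration and phase space invariant; together with $({\bf{X}}, t)$ this yields the second inertial coordinate set $({\bf{X}}, \tau)$, and analogously $({\bf{x}}_i, t)$, $({\bf{x}}_i, \tau_i)$ for each particle.

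Next I would establish part (1). In $({\bf{X}}, t)$ the second Einstein postulate gives a constant $c$, and $t$ transforms via the usual Lorentz boosts, so time is relative. In $({\bf{X}}, \tau)$, the proof of Theorem 2.1 has already exhibited that $\tau$ is the same for every inertial observer examining the same system (the scale factor $b$ absorbs the difference via $ct = b\tau$ and $ct' = b'\tau$), so time is invariant; meanwhile $b = \sqrt{{\bf{U}}^2 + c^2}$ depends on the observer's choice of frame and can be made arbitrarily large as $|{\bf{U}}|$ grows, giving a relative speed of light with no upper bound. Part (1) is then a direct reading of these two representations.

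Finally, for part (2), I plan to use the contact-transformation framework of Section 3. At the global level, Theorem 3.2 produces $S = [Mc^2 - K]\tau$ with $dS = [Mc^2 - K]d\tau$, so the Pfaffian forms $\mathbf{P}\cdot d\mathbf{X} - H\,dt$ and $\mathbf{P}\cdot d\mathbf{X} - K\,d\tau$ differ by an exact form; hence the variational principles, and therefore Hamilton's equations, are equivalent. At the particle level, Corollary 3.3 does the same job for $\sum_i \mathbf{p}_i\cdot d\mathbf{x}_i$. The commutation relations (\rf{cr}) and (\rf{cr'}) then differ only by the constant factor $H/Mc^2$, a rigid rescaling of time parameters that preserves the Lie-algebraic content and the dynamical content. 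The main obstacle I expect is simply to keep the notion of ``equivalent'' strictly mathematical: the constant-scale property of $H/Mc^2$ (which requires $H$, $M$ conserved for the closed system) is the single technical point on which the canonical identification hinges, and the subsequent Corollary 2.2 already warns that this mathematical equivalence will not translate into physical equivalence when one passes to wave equations in later sections.
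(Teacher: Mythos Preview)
Your handling of the Einsteinian property, the two coordinate sets, and part (1) matches the paper's own organization: these are indeed already established by Theorem~2.1, Theorem~2.3, and the identities (\ref{id1})--(\ref{id2}), and the paper says as much (``We have already proven all but the last part of the above theorem'').

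For part (2) you take a genuinely different route from the paper. You argue abstractly via the generating function: Theorem~3.2 and Corollary~3.3 show the Pfaffian forms differ by an exact $dS$, so the variational principles and hence Hamilton's equations coincide. The paper instead proves part (2) by direct computation of Hamilton's equations in two concrete interacting settings: first for the whole system with an external potential $V({\bf X})$ (equations (\ref{eg2})--(\ref{eg3}) in Section~3.2), and then for a single particle under electromagnetic minimal coupling (equations (\ref{ipd1})--(\ref{ipd2}) in Section~4.1.1), where it explicitly states ``This also completes the proof of Theorem~3.6.'' Your argument is cleaner and exposes the structural reason for the equivalence (a canonical time reparametrization with constant scale $H/Mc^2$), but it rests on $S=[Mc^2-K]\tau$ being a valid generating function, which in turn assumes $K$ retains its form under interaction. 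The paper's explicit computation is precisely what verifies that assumption in the minimally coupled case; without it, your argument covers the closed free system but leaves the interacting particle-level claim as an unchecked consistency condition. If you want your route to stand alone, you should add the observation that the identity (\ref{id2}) applied to $d{\bs\pi}_i/dt$ directly gives the particle-level equivalence for any force law, which is essentially what (\ref{ipd1})--(\ref{ipd2}) confirm.
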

We have already proven all but the last part of the above theorem.  The next section is devoted to external dynamics.  We complete our proof in the second part, when we study electrodynamics. 
\subsection{ General System Dynamics}
In this section we view the system from an external perspective as if it is one interacting particle.  At this level,  it suffices to assume the interaction is via a potential $V({\bf{X}})$.
We can add $V$ to the equation for $H$, to get:
\beqn\lb{eg2}
\begin{gathered}
  H = \sqrt {{c^2}{{\mathbf{P}}^2} + {M^2}{c^4}}  + V({\mathbf{X}}) = {H_0} + V({\mathbf{X}}) \Rightarrow  \hfill \\
  \frac{{d{\mathbf{X}}}}{{dt}} = \frac{{{c^2}{\mathbf{P}}}}{{{H_0}}}\quad {and}\quad \frac{{d{\mathbf{P}}}}{{dt}} =  - \nabla V({\mathbf{X}}). \hfill \\ 
\end{gathered} 
\eeqn
For comparison, if we use the proper clock, we get: 
\beqn{\lb{eg3}}
\begin{gathered}
  K = \frac{{{H^2}}}{{2M{c^2}}} + \frac{{M{c^2}}}{2} \Rightarrow \frac{{d{\mathbf{X}}}}{{d\tau }} = \frac{{\partial K}}{{\partial {\mathbf{P}}}} = \frac{H}{{M{c^2}}}\frac{{{c^2}{\mathbf{P}}}}{{{H_0}}} = \frac{b}{c}\frac{{d{\mathbf{X}}}}{{dt}}, \hfill \\
  \frac{{d{\mathbf{P}}}}{{d\tau }} = \frac{{\partial K}}{{\partial {\mathbf{X}}}} =  - \frac{H}{{M{c^2}}}\nabla V\left( {\mathbf{X}} \right) = \frac{b}{c}\frac{{d{\mathbf{P}}}}{{dt}}. \hfill \\ 
\end{gathered}
 \eeqn
Comparison of ({\rf{eg2}}) and ({\rf{eg3}}) shows that the two clocks give mathematically equivalent equations of motion for the general system dynamics.
\subsection{The Clock Relationship}
There is a basic relationship between the global system clock and the clocks of the individual particles.  To derive this relationship, return to our definition of the global Hamiltonian $K$ and let $W$ be
any observable.  Then  
\begin{align}{\lb{c1}}
  & {{dW} \over {d\tau }} = \left\{ {K,W} \right\} = {H \over {Mc^2 }}\left\{ {H,W}
\right\} =  {H \over {Mc^2 }}\sum\limits_{i = 1}^n {\left\{ {H_i ,W} \right\}}   \cr 
  & \qquad = {H \over {Mc^2 }}\sum\limits_{i = 1}^n {{{m_i c^2 } \over {H_i
}}\left[ {{{H_i } \over {m_i c^2 } }\left\{ {H_i ,W} \right\}} \right]}  =
\sum\limits_{i = 1}^n {{{Hm_i } \over {MH_i }}\left\{ {K_i ,W} \right\}}.  
\end{align} 
Using the (easily derived) fact that $d\tau _i /d\tau  = Hm_i
/MH_i  = b_i /b$, we get 		  
\beqn{\lb{c2}}
{{dW} \over {d\tau }} = \sum\limits_{i = 1}^n
{{{d\tau _i } \over {d\tau }}\left\{ {K_i ,W} \right\}}. 
\eeqn           
Equation ({\rf{c2}}) allows us to relate the global system dynamics to the local systems dynamics.  Let us combine equations ({\rf{c1}}) and ({\rf{c2}}), to get our third identity: 
\beqn{\lb{id3}}
d\tau \left\{ {K,W} \right\} \equiv  \sum\limits_{i = 1}^n {d{\tau _i}\left\{ {{K_i},W} \right\}} \quad  \Rightarrow \quad d\tau {K^P}  = \sum\limits_{i = 1}^n d{\tau _i} {K_i^P}.
\eeqn
Where the last equation is strictly defined with the Poisson brackets.  This  provides 
 the basis for a many particle relativistic quantum theory with a universal wave function, using the transition to Heisenberg brackets on both sides (geometric quantization).
 
In closing this part, we recall that, in some cases, it is natural to place the observer at the center of mass.  In this case,  equation ({\rf{c2}}) can be written as:
\beqn{\lb{c4}}
{{dW} \over {dt }} = \sum\limits_{i = 1}^n
{{{d\tau _i } \over {dt }}\left\{ {K_i ,W} \right\}} 
\eeqn           
and equation ({\rf{id3}}) can be written as:
\beqn{\lb{id5}}
dt \left\{ {H,W} \right\} \equiv  \sum\limits_{i = 1}^n {d{\tau _i}\left\{ {{K_i},W} \right\}} \quad  \Rightarrow \quad dt {H^P}  = \sum\limits_{i = 1}^n d{\tau _i} {K_i^P}.
\eeqn
\section{\bf Maxwell and Einstein without Minkowski}
If we replace $t$ by $\tau$ at the global level for electrodynamics, no new results are produced other then what is expected from Part I.  Thus, we focus on the direct interaction of a particle with an external field,  another particle or the local interaction of particles as seen from the center of mass.  
\subsection{{Maxwell Particle Dynamics}}
\subsubsection{ Dynamics of a Particle }  
We now investigate the corresponding single particle dynamical  theory. In this section, $b=b_i$,  $\tau =\tau_i$ and ${\bf{u}}={\bf{u}}_i$. 

Since $\tau$ is invariant during interaction (minimal coupling), we make the natural assumption that the form of $K$ also remains invariant.  Thus, if $\sqrt {c^2 {\mathbf{p}}^2  + m^2 c^4 }  \to \sqrt {c^2 {\bs{\pi }}^2  + m^2 c^4 }  + V$, where  $\bf A$ is the vector potential, $V=e\Phi$ is the potential energy, ${\mathbf{E}} =  - \tfrac{1}{b}\left( {\partial {\mathbf{A}}/\partial \tau } \right) - \bs{\nabla} \Phi $ and  ${\bs{\pi }} = {\mathbf{p}} - \tfrac{e}
{c}{\mathbf{A}}$.  In this case, $K$ becomes:
\[
K = \frac{{{H^2}}}{{2m{c^2}}} + \frac{{m{c^2}}}{2}=\frac{{ {\bs{\pi }}^2}}
{{2m}} + mc^2  + \frac{{V^2 }}
{{2mc^2 }} + \frac{{V\sqrt {c^2  {\bs{\pi }}^2+ m^2 c^4 } }}
{{mc^2 }}.
\]
If we set $H_0=\sqrt {c^2 {\bs{\pi }}^2  + m^2 c^4 }$, use standard vector identities with ${\bs{\nabla}} \times \bs\pi=-\tfrac{e}{c}\bf{B}$, and compute Hamilton's equations, we get: 
\beqn{\lb{ipd1}}
\frac{{d{\mathbf{x}}}}{{d\tau }} = \frac{{\partial K}}{{\partial {\mathbf{p}}}} = \frac{H}{{m{c^2}}}\left( {\frac{{{c^2}{\bs{\pi }} }}{{{H_0}}}} \right) = \frac{b}{c}\left( {\frac{{{c^2}{\bs{\pi }} }}{{{H_0}}}} \right) = \frac{b}{c}\frac{{d{\mathbf{x}}}}{{dt}}
\eeqn
and
\beqn{\lb{ipd2}}
\begin{gathered}
  \frac{{d{\mathbf{p}}}}{{d\tau }} = \frac{b}{c}\frac{{\left[ {\left( {{c^2}{\bs{\pi }}  \cdot {\bs{\nabla}} } \right){\mathbf{A}} + \tfrac{e}{b}\left( {{c^2}{\bs{\pi }}  \times {\mathbf{B}}} \right)} \right]}}{{{H_0}}} - \frac{b}{c}{\bs{\nabla}} V \hfill \\
   = \frac{b}{c}\left[ {\left( {{\mathbf{u}} \cdot {\bs{\nabla}} } \right){\mathbf{A}} + \tfrac{e}{b}\left( {{\mathbf{u}} \times {\mathbf{B}}} \right)} \right] - \frac{b}{c}{\bs{\nabla}} V \hfill \\
   = \frac{b}{c}\left[ {e{\mathbf{E}} + \tfrac{e}{b}\left( {{\mathbf{u}} \times {\mathbf{B}}} \right) + \tfrac{e}{b}\frac{{d{\mathbf{A}}}}{{d\tau }}} \right]\quad  \Rightarrow  \hfill \\
\frac{c}{b}\frac{{d{\bs\pi} }}{{d\tau }}
= \left[ {e{\mathbf{E}} + \tfrac{e}{b}\left( {{\mathbf{u}} \times {\mathbf{B}}} \right)} \right]=\frac{{d{\bs\pi} }}{{dt }}. \hfill \\ 
\end{gathered} 
\eeqn
Equations ({\rf{ipd1}}) and ({\rf{ipd2}}) show that the standard and dual equations of motion are mathematically equivalent.  Thus, our assumption that $K$ remain invariant with minimal coupling was the correct choice.   This also completes the proof of Theorem 3.6.
\subsubsection{{Field of a Particle}}  
To study the field of a particle, we write Maxwell's equations (in c.g.s. units): 
\beqn{\lb{imx1}}
\begin{gathered}
\nabla  \cdot {\mathbf{B}} = 0,\quad \quad \quad \nabla  \cdot {\mathbf{E}} = 4\pi \rho , \hfill \\
\nabla  \times {\mathbf{E}} =  - \frac{1}{c}\frac{{\partial {\mathbf{B}}}}{{\partial t}},\quad \nabla  \times {\mathbf{B}} = \frac{1}{c}\left[ {\frac{{\partial {\mathbf{E}}}}
{{\partial t}} + 4\pi \rho {\mathbf{w}}} \right]. \hfill \\ 
\end{gathered} 
\eeqn
Using equations ({\rf{id1}}) and ({\rf{id1a}}) in ({\rf{imx1}}), we have ({{\it{the mathematically identical  representation}}):
\beqn{\lb{imx2}}
\begin{gathered}
\nabla  \cdot {\mathbf{B}} = 0,\quad \quad \quad \nabla  \cdot {\mathbf{E}} = 4\pi \rho , \hfill \\
\nabla  \times {\mathbf{E}} =  - \frac{1}{b}\frac{{\partial {\mathbf{B}}}}{{\partial \tau }},\quad \nabla  \times {\mathbf{B}} = \frac{1}{b}\left[ {\frac{{\partial {\mathbf{E}}}}
{{\partial \tau }} + 4\pi \rho {\mathbf{u}}} \right]. \hfill \\ 
\end{gathered} 
\eeqn
Thus, we obtain a mathematically equivalent set of Maxwell's equations using the local time of the particle to describe its fields.  

To derive the corresponding wave equations, we  take the curl of the last two equations in ({\rf{imx2}}), and use standard vector identities, to get: 
\beqn{\lb{imx3}}
\begin{gathered}
 \frac{1}{{b^2 }}\frac{{\partial^2 {\mathbf{B}}}}{{\partial \tau ^2 }} - \frac{{{\mathbf{u}} \cdot {\mathbf{a}}}}{{b^4 }}\left[ {\frac{{\partial {\mathbf{B}}}}
{{\partial \tau }}} \right] - \nabla ^2 \cdot {\mathbf{B}} = \frac{1}{b}\left[ 4\pi \nabla  \times (\rho {\mathbf{u}}) \right], \hfill \\
 \frac{1}{{b^2 }}\frac{{\partial ^2 {\mathbf{E}}}}{{\partial \tau ^2 }} - \frac{{{\mathbf{u}} \cdot {\mathbf{a}}}}{{b^4 }}\left[ {\frac{{\partial {\mathbf{E}}}}
{{\partial \tau }}} \right] - \nabla ^2  \cdot {\mathbf{E}} =  - \nabla (4\pi \rho ) - \frac{1}{b}\frac{\partial }{{\partial \tau }}\left[ {\frac{{4\pi (\rho {\mathbf{u}})}}{b}} \right], \hfill \\ 
\end{gathered} 
\eeqn				
where ${\bf{a}} = d{\bf{u}}/d\tau$ is the particle acceleration.  Thus, a new term arises when the proper-time of the charge is used to describe its fields.   This makes it clear that the local clock encodes information about the particle's interaction that is unavailable when the clock of the observer, co-moving observer or the proper clock of the center of mass  is used to describe the fields.   The new term in equation ({\rf{imx3}}) is dissipative, acts to oppose the acceleration, is zero when ${\bf{a}} =0$  or perpendicular to $\bs{u}$.  It also arises instantaneously with the force.  Furthermore,  this term does not depend on the nature of the force.  This is exactly what one expects of the back reaction caused by the inertial resistance of the particle to accelerated motion and, according to Wheeler and Feynman \cite{13}, is precisely what is meant by radiation reaction.  
\begin{rem}
It is of particular interest that this implies a charged particle can distinguish between inertial  and accelerating frames.   Thus, an observer in an elevator can always determine the state of motion.
From this point of view, it is no surprise that the 2.7 $^{\circ}$K MBR represents a unique preferred frame of rest.
\end{rem}
  If we make a scale transformation (at fixed position) with ${\bf{E}} \to (b/c)^{1/2}{\bf{E}}$    and ${\bf{B}} \to (b/c)^{1/2}{\bf{B}}$, the equations in ({\rf{imx3}}) transform to 
\beqn{\lb{imx4}}
\begin{gathered}
 \frac{1}{{b^2 }}\frac{{\partial ^2 {\mathbf{B}}}}{{\partial \tau ^2 }} - {\text{ }}\nabla ^2 {\kern 1pt}  \cdot {\mathbf{B}} + \left[ \frac{{\ddot b}}
{{2b^3 }}-{\frac{{3\dot b^2 }}{{4b^4 }}  } \right]{\mathbf{B}} = \frac{{c^{1/2} }}{{b^{3/2} }}\left[ {4\pi \nabla  \times (\rho {\mathbf{u}})} \right], \hfill \\
 \frac{1}{{b^2 }}\frac{{\partial ^2 {\mathbf{E}}}}{{\partial \tau ^2 }} - {\text{ }}\nabla ^2 {\kern 1pt}  \cdot {\mathbf{E}} +  \left[ \frac{{\ddot b}}
{{2b^3 }}-{\frac{{3\dot b^2 }}{{4b^4 }}  } \right]{\mathbf{E}} =  - \frac{{c^{1/2} }}{{b^{1/2} }}\nabla (4\pi \rho ) - \frac{{c^{1/2} }}{{b^{3/2}}}\frac{\partial}{{\partial \tau }}\left[ {\frac{{4\pi (\rho {\mathbf{u}})}}{b}} \right]. \hfill \\ 
\end{gathered} 
\eeqn
This is the Klein-Gordon equation with an effective mass $\mu$ given by
\beqn
\mu  = \left\{ {\frac{{\hbar ^2 }}{{c^2 }}\left[ {\frac{{\ddot b}}
{{2b^3 }} - \frac{{3\dot b^2 }}{{4b^4 }}} \right]} \right\}^{1/2}  = \left\{ {\frac{{\hbar ^2 }}{{c^2 }}\left[ {\frac{{{\mathbf{u}} \cdot {\mathbf{\ddot u}} + {\mathbf{\dot u}}^2 }}  {{2b^4 }} - \frac{{5\left( {{\mathbf{u}} \cdot {\mathbf{\dot u}}} \right)^2 }}
{{4b^6 }}} \right]} \right\}^{1/2}. 
\eeqn
In the following sections, we verify that our view of $\mu$ as an effective mass is correct.
\subsection{Radiation from the Accelerated  Charge }
In this section, we directly compute the radiation from an accelerated
charge.  Using potentials, it easy to check that, with  the Lorentz condition and
\beqn
{\mathbf{B}} = \nabla  \times {\mathbf{A}},\quad {\mathbf{E}} =  - \frac{1}{b}\frac{{\partial {\mathbf{A}}}}{{\partial \tau }} - \nabla \Phi ,
\eeqn
the wave equations for the potentials are: 
\beqn{\lb{imp1}}
\begin{gathered}
  \frac{1}{{{b^2}}}\frac{{{\partial ^2}{\mathbf{A}}}}{{\partial {\tau ^2}}} - \frac{{\left( {{\mathbf{u}} \cdot {\mathbf{a}}} \right)}}{{{b^4}}}\frac{{\partial {\mathbf{A}}}}{{\partial \tau }} - {\nabla ^2}{\mathbf{A}} = \frac{{4\pi \rho {\mathbf{u}}}}{b} \hfill \\
  \frac{1}{{{b^2}}}\frac{{{\partial ^2}\Phi }}{{\partial {\tau ^2}}} - \frac{{\left( {{\mathbf{u}} \cdot {\mathbf{a}}} \right)}}{{{b^4}}}\frac{{\partial \Phi }}{{\partial \tau }} - {\nabla ^2}\Phi  = 4\pi \rho  \hfill \\ 
\end{gathered} 
\eeqn
We could solve the equations ({\rf{imp1}}), but it is easier to first obtain the solution using the proper-time of the observer and
then transform our result to the proper-time of the source. This makes the computations easier to follow and also provides the result quicker.  We follow the approach due to Panofsky and Phillips \cite{34}.   In this regard, $\left( {{\bf x}(t),t}
\right)$ represent the field position and $\left( {{\bf x}'(t'),t'} \right)$ represent the
retarded position of a point charge source $q$, with  ${\bf r}={\bf x}-{\bf x}'$,
${d{\bf r}}/d{t'}=-{\bf w}$, and ${d^2{\bf r}}/d{t'}^2={\dot{\bf w}}$.  The solutions are the standard Lienard-Wiechert potentials,  given by 
\beqn{\lb{imp2}}
{\bf A}={{q{\bf w}} \over {cs}},\ \ \ \ \ \Phi ={q \over s},\ \ \ \ 
s=r-\left( {{{{\bf r}\cdot {\bf w}} \over c}} \right).  
\eeqn 
We obtain the  proper-time form by replacing ${\bf w}/c$ by ${\bf u}/b$ to get 
\beqn{\lb{imp3}}
{\bf A}={{q{\bf u}} \over {bs}},\ \ \ \ \
\Phi ={q \over s},\ \ \ \ s=r-\left( {{{{\bf r}\cdot {\bf u}} \over b}}\right). 
\eeqn
The source-point  and field variables are related by the condition  
\beqn{\lb{imp4}}
 r=\left| {{\bf
x}-{\bf x'}} \right|=c(t-t').
\eeqn 
In the proper time variables, $d{\bf r}/d\tau '=-{\bf u}=-d{\bf x}'/d\tau '$ and $\tau'$ is the retarded
proper-time of the source.  The corresponding {\bf E} and {\bf B} fields are
computed using equation ({\rf{imp1}}) in the form 
\beqn 
{\bf E}({\bf x},\tau )={-{1 \over {\bar b}}{{\partial {\bf A}({\bf x},\tau
)}\over {\partial\tau }}-\nabla \Phi({\bf x},\tau ),}\,\,\,{\bf B}({\bf x},\tau )=\nabla
\times {\bf A}({\bf x},\tau ) 
\eeqn 
with ${\bar {\bf u}}=d{\bf x}/d{\tau}$, where $\tau$ is the proper-time of the
present position of the source and $\bar b=\left( {\bar {\bf u}^2+c^2} \right)^{1/2}$. 
To compute the fields from the potentials, we observe that the components of the
$\nabla$ operator are partials at constant $\tau$, and therefore are not partials at constant
$\tau'$.  Also, the partial derivatives with respect to $\tau$ imply constant ${\bf x}$
and hence refer to the comparison of potentials at a given point over an interval in
which the coordinates of the source will have changed.  Since only variations in time 
with respect to $\tau'$ are given, we must transform $(\partial /\partial \tau )\left|
{_{\bf x}}\right.$ and $\nabla
\left| {_\tau } \right.$ to expressions in terms of ${\partial/{\partial \tau' }}\left| {_{\bf
x}} \right.$.  For this, we must first transform $({\rf{imp4}})$ into a relationship between
$\tau$ and $\tau'$.   The  required transformation is  
\beqn 
c(t-t')=\int_{\tau' }^\tau {b(s)ds}.
\eeqn
The best approach is to first relate ${\partial / {\partial t}}\left|{_{\bf x}}\right.$   to   ${\partial /
{\partial t'}} \left|{_{\bf x}}\right.$ and then convert them to relationships between
${\partial/{\partial \tau }}{\left|{_{\bf x}}\right.}$ and 
${\partial/{\partial\tau'}}{\left|{_{\bf x}}\right.}$.  This leads to (see \cite{34}, pg. 298):   
\beqn
{{\partial r} \over {\partial t'}}=-{{{\bf r}\cdot {\bf w}} \over r},\ \
{{\partial r} \over {\partial t}}=c\left( {1-{{\partial t'} \over {\partial t}}}
\right)={{\partial r} \over {\partial t'}}\cdot {{\partial t'} \over {\partial t}}=-{{{\bf
r}\cdot {\bf w}} \over r}{{\partial t'} \over {\partial t}}.
\eeqn
Since ${{\partial \tau }/{\partial t}}={c /b}$, we get:  
\beqn 
{{\partial r} \over {\partial
t}}=c{\partial  \over {\partial t}}\left( {t-t' } \right)={{\partial \tau } \over {\partial
t}}{\partial  \over {\partial \tau }}\int_{\tau' }^\tau  {b(s)ds}={c \over {\bar b}}\left[
{\bar b-b{{\partial \tau' } \over {\partial \tau }}} \right].
\eeqn   
We also have, using ${{\partial \tau' }/{\partial t'}}={c/b}$ , that  
\beqn 
{{\partial r} \over
{\partial t'}}={{\partial r} \over {\partial \tau' } }{{\partial \tau' } \over {\partial
t'}}={c \over b}{{\partial r} \over {\partial \tau' }}\Rightarrow {1 \over b}{{\partial
r} \over {\partial \tau' }}=-{{{\bf r}\cdot {\bf w}} \over {rc}}=-{{{\bf r}\cdot {\bf
u}} \over {rb}},
\eeqn
so ${{\partial r} /{\partial\tau' }}=-{{{\bf r}\cdot{\bf u}}/ r}$ and hence 
\beqn 
{{\partial r} \over {\partial t}}={{\partial r} \over {\partial \tau
}}{c \over {\bar b}}={c \over {\bar b} }\left[ {\bar b-b{{\partial \tau' } \over {\partial
\tau }}} \right]\Rightarrow {{\partial r} \over {\partial \tau }}=\left[ {\bar
b-b{{\partial \tau' } \over {\partial \tau }}} \right],
\eeqn 
\beqa 
{{\partial r} \over
{\partial \tau }}={{\partial r} \over {\partial \tau' }}{{\partial \tau' } \over {\partial
\tau }}=-{{{\bf r}\cdot {\bf u}} \over r}{{\partial \tau' } \over {\partial \tau
}}\Rightarrow \ \ \ -{{{\bf r}\cdot {\bf u}} \over r}{{\partial \tau' } \over {\partial \tau
}}=\left[ {\bar b-b{{\partial \tau' } \over {\partial \tau }}} \right].
\eeqa 
If we solve the above for ${{\partial \tau' }/ {\partial \tau }}$, we have: 
\beqn{\lb{rtd}} 
{{\partial \tau' }
\over {\partial \tau }}={{\bar b} \over b}{r \over s},\ \ \ s=r-{{{\bf r}\cdot {\bf u}}
\over b}.
\eeqn  
Using this, we see that 
\beqn{\lb{rt}}
{1 \over {\bar b}}{\partial  \over
{\partial \tau }}={1 \over b}\cdot {r \over s}{\partial  \over {\partial \tau'}}.
\eeqn   
From $\nabla r=-c\nabla t'=\nabla_1r+({{\partial r}/ {\partial
t'}}){\nabla t'}$, we see that 
\[
\nabla r={{\bf r} \over r}-{c \over b}\cdot {{{\bf
r}\cdot {\bf u}} \over r}\nabla t'\ \ \Rightarrow -c\nabla t'={{\bf r} \over r}-{c \over
b}\cdot {{{\bf r}\cdot {\bf u}} \over r}\nabla t'.
\]  
Using $c\nabla
t'=b\nabla \tau' $ and solving for $\nabla \tau' $, we get $\nabla \tau' =-\left( {{{\bf r}
\mathord{\left/ {\vphantom {{\bf r} {bs}}} \right. \kern-\nulldelimiterspace} {bs}}}
\right)$, so that 
\[
 \nabla =\nabla _1-{{\bf r} \over {bs}}\cdot {\partial \over {\partial
\tau' }}. 
\]    
We now compute ${\nabla _1s}$ and  ${\partial s}/{\partial
\tau' }$.  The calculations are easy, so we simply state the results: 
\[ 
\nabla _1s={{\bf
r} \over r}-{{\bf u} \over b}={1 \over r}\left( {{\bf r}-{{r{\bf u}} \over b}}  \right),
\] 
\[ 
{{\partial s} \over {\partial \tau' }}={{{\bf u}^2} \over b}-{{{\bf
r}\cdot {\bf u}} \over r}-{{{\bf r}\cdot {\bf a} } \over b}+{{\left( {{\bf r}\cdot {\bf
u}} \right)\left( {{\bf u}\cdot {\bf a}} \right)} \over {b^3}}.
\]
We can now calculate the fields.  The computations are long but follow those of
 \cite{34}, so we only record a few selected results.  We obtain  
\beqn
\begin{gathered}
   - \nabla \Phi  = \frac{q}{{{s^2}}}\nabla s = \frac{q}{{{s^2}}}\left( {{\nabla _1}s - \frac{{\mathbf{r}}}{{bs}}\frac{{\partial s}}{{\partial \tau }}} \right)\quad  \Rightarrow  \hfill \\
   - \nabla \Phi  = \frac{{q\left[ {{\mathbf{r}}\left( {1 - {{{{\mathbf{u}}^2}} \mathord{\left/
 {\vphantom {{{{\mathbf{u}}^2}} {{b^2}}}} \right.
 \kern-\nulldelimiterspace} {{b^2}}}} \right) - {\mathbf{u}}sb} \right]}}{{{s^3}}} + \frac{{q{\mathbf{r}}\left( {{\mathbf{r}} \cdot {\mathbf{a}}} \right)}}{{{b^2}{s^3}}} - \frac{{q{\mathbf{r}}\left( {{\mathbf{r}} \cdot {\mathbf{u}}} \right)\left( {{\mathbf{u}} \cdot {\mathbf{a}}} \right)}}{{{b^4}{s^3}}}. \hfill \\ 
\end{gathered} 
\eeqn
Now  use equation $({\rf{rt}})$ to get  
$$
 -{1 \over {\bar b}}{{\partial {\bf A}} \over {\partial \tau }}=\left( {-{1 \over b}}
\right)\left( {{r \over s} } \right){{\partial {\bf A}} \over {\partial \tau' }}\Rightarrow
$$  
\beqa 
- \frac{1}{b}\frac{{\partial {\mathbf{A}}}}{{\partial \tau }} = \frac{{ - \left( {q{\mathbf{r}}{u \mathord{\left/
 {\vphantom {u b}} \right.
 \kern-\nulldelimiterspace} b}} \right)\left[ {\left( {{{\mathbf{r}} \mathord{\left/
 {\vphantom {{\mathbf{r}} r}} \right.
 \kern-\nulldelimiterspace} r} - {{\mathbf{u}} \mathord{\left/
 {\vphantom {{\mathbf{u}} b}} \right.
 \kern-\nulldelimiterspace} b}} \right) \cdot \left( {{{\mathbf{u}} \mathord{\left/
 {\vphantom {{\mathbf{u}} b}} \right.
 \kern-\nulldelimiterspace} b}} \right)} \right]}}{{{s^3}}} + \frac{{ - q{r^2}{\mathbf{a}} + qr\left[ {{\mathbf{r}} \times \left( {{\mathbf{a}} \times {{\mathbf{u}} \mathord{\left/
 {\vphantom {{\mathbf{u}} b}} \right.
 \kern-\nulldelimiterspace} b}} \right)} \right]}}{{{b^2}{s^3}}} + \frac{{q{\mathbf{u}}\left( {{\mathbf{r}} \cdot {\mathbf{r}}} \right)\left( {{\mathbf{u}} \cdot {\mathbf{a}}} \right)}}{{{b^4}{s^3}}}.
\eeqa
Combining the above with $(6.18)$, and using standard vector identities, with ${\bf r}_{\bf u}={\bf r}-{\bf u}r/b$, we have:    
\[
\begin{gathered}
  {\mathbf{E}}\left( {{\mathbf{x}},\tau } \right) =  - \frac{1}{b}\frac{{\partial {\mathbf{A}}\left( {{\mathbf{x}},\tau } \right)}}{{\partial \tau }} - \nabla \Phi \left( {{\mathbf{x}},\tau } \right) \hfill \\
   = \frac{{q{{\mathbf{r}}_u}\left( {1 - {{{{{{\mathbf{u}}^2}} \mathord{\left/
 {\vphantom {{{{\mathbf{u}}^2}} b}} \right.
 \kern-\nulldelimiterspace} b}}^2}} \right)}}{{{s^3}}} + \frac{{q\left[ {{\mathbf{r}} \times \left( {{{\mathbf{r}}_u} \times {\mathbf{a}}} \right)} \right]}}{{{b^2}{s^3}}} + \frac{{q\left( {{\mathbf{u}} \cdot {\mathbf{a}}} \right)\left[ {{\mathbf{r}} \times \left( {{\mathbf{u}} \times {\mathbf{r}}} \right)} \right]}}{{{b^4}{s^3}}}. \hfill \\ 
\end{gathered} 
\]
The computation of ${\bf B}$ is similar:  
\[
{\mathbf{B}} = \frac{{q\left( {{{\mathbf{r}}_u} \times {\mathbf{r}}} \right)\left( {1 - {{{{{{\mathbf{u}}^2}} \mathord{\left/
 {\vphantom {{{{\mathbf{u}}^2}} b}} \right.
 \kern-\nulldelimiterspace} b}}^2}} \right)}}{{{rs^3}}} + \frac{{q{\mathbf{r}} \times \left[ {{\mathbf{r}} \times \left( {{{\mathbf{r}}_u} \times {\mathbf{a}}} \right)} \right]}}{{{rb^2}{s^3}}} + \frac{{q\left( {{\mathbf{u}} \cdot {\mathbf{a}}} \right)\left[ {{\mathbf{r}} \times {\mathbf{u}}} \right]}}{{{b^4}{s^3}}}.
\]

It is easy to see that ${\bf B}$ is orthogonal to ${\bf E}$.  The first two terms in the above two equations are the same as (19-13) and (19-14) in \cite{34} (pg. 299).  The last term in each case arises because of the dissipative terms in equations $(6.3)$ and $(6.7)$.  These  terms are zero if $\bf a$ is zero or orthogonal to $\bf u$.  In the first case, there is no radiation and the particle moves with constant velocity so that the field is massless. The second case depends on  the creation of motion which keeps $\bf a$ orthogonal to $\bf u$ (for example a betatron).   Since  ${\bf r}\times \left( {{\bf u}\times {\bf r}}
\right)=r^2{\bf u}-\left( {{\bf u}\cdot {\bf r}} \right){\bf r}$, we see that there is a
component along the direction of propagation (longitudinal).  (Thus,  the $\bf E$ field has a  longitudinal part.) This confirms our claim that the new dissipative term is equivalent to an effective mass.   This means that the cause for radiation reaction comes directly from the use of the local clock to formulate Maxwell's equations.  Thus, there is no need to assume advanced potentials, self-interaction or mass renormalization along with the Lorentz-Dirac equation in order to account for radiation reaction as is required when the observer clock is used (Dirac theory). Furthermore, no assumptions about the structure of the source are required (i.e., Poincar${\acute e}$ stresses).  
\begin{rem}
We conjecture that this effective mass is the actual source of the photoelectric effect and that the photon is a real particle of non-zero (dynamical) mass, which travels with the fields but is not a field in the normal sense.  If this conjecture is correct, radiation from a betatron (of any frequency)  exposed to a metal surface will not produce photo electrons.  Such an experiment is within reach with current equipment. There are other implications of this observation, but further reflection is required.
\end{rem}
\subsection{Radiated Energy} 
The difference in the calculated fields for the two representations, makes it important to also compute the radiated energy for the (local) dual theory and compare it with the standard formulation.  The radiated energy is determined by the Poynting vector, which is defined by ${\bf \mcP}=\left( {{c
\mathord{\left/ {\vphantom {c {4\pi }}} \right. \kern- \nulldelimiterspace} {4\pi }}}
\right)\left( {{\bf E}\times {\bf B}} \right)$.  We closely follow the calculations in \cite{28}.

To compute the angular distribution of the radiated energy, we must carefully  note
that the rate of radiation is the amount of energy lost by the charge in a time interval
$d\tau'$ during the emission of a signal $\left( {{{-dU} \mathord{\left/ {\vphantom
{{-dU} {d\tau' }}} \right. \kern-\nulldelimiterspace} {d\tau' }}} \right)$.  At a field point, the Poynting vector ${\bf \mcP}$ represents the energy flow per unit time measured at the present time ($\tau$).   With this understanding, the same approach
that leads to the above formula gives ${\bf \mcP}=\left( {{{\bar b} \mathord{\left/
{\vphantom {{\bar b} {4\pi }}} \right. \kern-\nulldelimiterspace} {4\pi }}}
\right)\left( {{\bf E}\times {\bf B}} \right)$ in the proper-time formulation.  We thus
obtain the rate of energy loss of a charged particle into a given infinitesimal solid
angle $d\Omega $ as 
\beqn{\lb{pv}} 
-{{dU} \over {d\tau' }}(\Omega )d\Omega =\left( {{{\bar b}
\mathord{\left/ {\vphantom {{\bar b} {4\pi }}} \right. \kern-\nulldelimiterspace}
{4\pi }}} \right)\left[ {{\bf n}\cdot \left( {{\bf E}\times {\bf B}} \right)} \right]{\bf
r}^2{{d\tau } \over {d\tau'  }}d\Omega .
\eeqn  
Using equation ({\rf{rtd}}), we get that $\left( {{{d\tau } \mathord{\left/ {\vphantom
{{d\tau } {d\tau' }}} \right. \kern-\nulldelimiterspace} {d\tau' }}} \right)={{bs}
\mathord{\left/ {\vphantom {{bs} {\bar br}}} \right. \kern-\nulldelimiterspace} {\bar br}}$, so that ({\rf{pv}}) becomes 
\beqn{\lb{pv1}} 
-{{dU} \over {d\tau' }}(\Omega )d\Omega =\left( {{b \mathord{\left/
{\vphantom {b {4\pi }}} \right. \kern- \nulldelimiterspace} {4\pi }}} \right)\left[
{{\bf n}\cdot \left( {{\bf E}\times {\bf B}} \right)} \right]rsd\Omega .
\eeqn
As is well-known, only those terms that fall off as $\left( {{1 \mathord{\left/
{\vphantom {1 r}} \right. \kern- \nulldelimiterspace} r}} \right)$ (the radiation terms)
 contribute to the integral of ({\rf{pv1}}).  It is easy to see that our theory gives the following radiation terms: 
\beqn{\lb{pv2}}
 {\bf E}_{rad}={{q\left\{ {{\bf r}\times
\left[ {{\bf r}_{\bf u}\times {\bf a}} \right]} \right\}} \over {b^2s^3}}+{{q\left( {{\bf
u}\cdot {\bf a}} \right)\left[ {{\bf r}\times \left( {{\bf u}\times {\bf r}} \right)}
\right]} \over {b^4s^3}}={\bf E}_{rad}^c+ {\bf E}_{rad}^d,
\eeqn 

\beqn{\lb{pv3}}
{\bf
B}_{rad}={{q{\bf r}\times \left\{ {{\bf r}\times \left[ {{\bf r}_{\bf u}\times {\bf a}}
\right]} \right\}} \over {rb^2s^3}}+{{qr\left( {{\bf u}\cdot {\bf a}} \right)\left( {{\bf
r}\times {\bf u}} \right)} \over {b^4s^3}}={\bf B}_{rad}^c+{\bf
B}_{rad}^d,
\eeqn 

where ${\bf E}_{rad}^c,{\bf B}_{rad}^c$ are of the
same form as the classical terms with $c$ replaced by $b$, ${\bf w}'$ by ${\bf u}$,
and $\dot {\bf w}'$ by ${\bf a}$.  The two terms ${\bf E}_{rad}^d,{\bf B}_{rad}^d$,
are new and come directly from the dissipation term in the wave equations. (Note the
characteristic ${{\left( {{\bf u}\cdot {\bf a}} \right)}\mathord{\left/ {\vphantom
{{\left( {{\bf u}\cdot {\bf a}} \right)} {b^4}}} \right. \kern-\nulldelimiterspace}
{b^4}}$.)  We can easily integrate the classical terms to see that 
\beqn{\lb{pv4}}
\iint_\Omega  {\left( { - d{U^c}/d\tau } \right)}d\Omega  = \frac{b}{{4\pi }}\iint_\Omega  {\left[ {{\mathbf{n}} \cdot \left( {{\mathbf{E}}_{rad}^c \times {\mathbf{B}}_{rad}^c} \right)} \right]}rsd\Omega  = \frac{2}{3}\frac{{{q^2}{{\left| {\mathbf{a}} \right|}^2}}}{{{b^3}}}
\eeqn
  
This agrees with the standard result for small proper- velocity and proper-acceleration
of the charge when $b\approx c$ and ${\bf a}\approx {{d{\bf w}} \mathord{\left/
{\vphantom {{d{\bf w}} {dt}}} \right. \kern- \nulldelimiterspace} {dt}}$.

In the general case, our theory gives additional effects because of the dissipative
terms.  To compute the integral of ({\rf{pv1}}), we use spherical coordinates with the
proper-velocity ${\bf u}$ directed along the positive z-axis.  Without loss of
generality, we orient the coordinate system so that the proper-acceleration ${\bf a}$
lies in the xz-plane.  Let ${\alpha}$ denote the acute angle between ${\bf a}$ and
${\bf u}$, and substitute ({\rf{pv2}}) and ({\rf{pv3}}) in ({\rf{pv1}}) to obtain 
\beqn{\lb{pv4}}
\begin{gathered}
   - \frac{{dU}}{{d\tau }}d\Omega  = \frac{{{q^2}{{\left| {\mathbf{a}} \right|}^2}}}{{4\pi {b^3}}}\left\{ {{{\left( {1 - {\beta ^2}\cos \theta } \right)}^{ - 4}}\left[ {1 - {{\sin }^2}\theta {{\sin }^2}\alpha \cos \phi  - {{\cos }^2}\theta {{\cos }^2}\alpha  - \tfrac{1}{2}\sin 2\theta \sin 2\alpha \cos \phi } \right]} \right. \hfill \\
   - 2\beta {\left( {1 - \beta \cos \theta } \right)^{ - 5}}\left( {{{\sin }^2}\theta \cos \alpha  - \tfrac{1}{2}\sin 2\theta \sin \alpha \cos \phi } \right)\chi  + {\beta ^2}{\sin ^2}\theta {\left( {1 - \beta \cos \theta } \right)^{ - 6}}\left. {{\chi ^2}} \right\} \hfill \\ 
\end{gathered} 
\eeqn
where 
\beqn{\lb{pv5}}
 \chi ={{b^2} \over {r\left| {\bf a}
\right|}}\left( {1- \beta ^2} \right)+\beta \cos \alpha \left( {1-{1 \over \beta}\cos \theta
} \right)-\sin \theta \sin \alpha \cos \phi ,
\eeqn   
and $\beta =\left({{{\left| {\bf u} \right|} \mathord{\left/ {\vphantom {{\left| {\bf u}
\right|} b}} \right. \kern-\nulldelimiterspace} b}} \right)$.

The integration of ({\rf{pv4}}) over the surface of the sphere is elementary, and we
obtain, after some extensive but easy computations (see the
appendix of \cite{27}): 
\beqn{\lb{pv6}}
\begin{gathered}
  \mathop {\lim }\limits_{r \to \infty } \iint_{{\Omega _r}} { - \frac{{dU}}{{d\tau }}d\Omega } \hfill \\
   = \frac{{2{q^2}{{\left| {\mathbf{a}} \right|}^2}}}{{3{b^3}}}{\left( {1 - {\beta ^2}} \right)^{ - 3}}\left[ {1 - \tfrac{1}{5}{\beta ^2}\left( {4 + {\beta ^2}} \right) + \tfrac{1}{5}{\beta ^2}\left( {6 + {\beta ^2}} \right){{\sin }^2}\alpha } \right]. \hfill \\ 
\end{gathered} 
\eeqn   
As can be seen, this result agrees with ({\rf{pv1}}) at the lowest order.  For comparison,
the same calculation using the observer's clock for the case of general orientation of
velocity ${{d{\bf x}'}\mathord{\left/ {\vphantom {{d{\bf x}'} {dt'}}} \right.
\kern-\nulldelimiterspace} {dt'}}$ and acceleration ${{d{\bf w}'}\mathord{\left/
{\vphantom {{d{\bf w}'} {dt'}}} \right. \kern-\nulldelimiterspace} {dt'}}$ is 
\beqn{\lb{pv7}}
\mathop {\lim }\limits_{r \to \infty } \int_{{\Omega _r}} { - \frac{{dU}}{{d\tau }}d\Omega }  = \frac{{2{q^2}{{\left| {{\mathbf{\dot w'}}} \right|}^2}}}{{3{c^3}}}{\left( {1 - {\beta ^2}} \right)^{ - 3}}\left[ {1 - {\beta ^2}{{\sin }^2}\alpha } \right], 
\eeqn 
 
where $\beta =\left({{{\left| {\bf w}' \right|} \mathord{\left/ {\vphantom {{\left| {\bf
w}' \right|} c}} \right.\kern-\nulldelimiterspace} c}} \right)$.

We observe that, in general, for an arbitrary angle ${\alpha}$ with ${0\le{\alpha}
\le{{\pi}/2}}$ and arbitrary ${\beta}$ between $0$ and $1$, our result does not agree
with ({\rf{pv6}}) even if we replace $b$ with $c$ and ${\bf a}$ with $d{\bf w}'/dt'$. These
relatively large changes may prove important in the study of the physical and quantum electronics of nano systems.
\subsection{\bf Proper-time Group}
In part I, we constructed the Poincar${\acute e}$ algebra for the global system and produced the transformation between scale factors.  This was sufficient to show that observers could share information when they knew their relative velocity.  In this section, we directly identify the new  (proper-time) transformation group at the particle level necessary to preserve the first postulate. As will be seen, this transformation is both nonlinear and nonlocal because $b$ is not constant in this case, but depends on $\tau$.  In this section, ${\bf{x}}={\bf{x}}_i$, ${\bf{x}}'={\bf{x}}'_i$, $\tau={\tau}_i$ and ${\bf{V}}$ is the relative velocity between two observers.

The standard (Lorentz) time transformations between two inertial observers can be written as
\beqn\lb{t1}
t' = \gamma ({\mathbf{V}})\left[ {t - {{{\mathbf{x}} \cdot {\mathbf{V}}} \mathord{\left/ {\vphantom {{{\mathbf{x}} \cdot {\mathbf{V}}} {c^2 }}} \right.
 \kern-\nulldelimiterspace} {c^2 }}} \right], \quad \quad \quad \quad {\text{      }}t = \gamma ({\mathbf{V}})\left[ {t' + {{{\mathbf{x'}} \cdot {\mathbf{V}}} \mathord{\left/
 {\vphantom {{{\mathbf{x'}} \cdot {\mathbf{V}}} {c^2 }}} \right. \kern-\nulldelimiterspace} {c^2 }}} \right].
\eeqn
We want to replace $t,\;t'$ by $\tau$.  To do this, use the relationship between $dt$ and $d\tau$ to get:
\beqn\lb{t2}
t = \tfrac{1}{c}\int_0^\tau  {b(s)} ds = \tfrac{1}{c}\bar b\tau ,\quad  t' = \tfrac{1}{c}\int_0^\tau  {b'(s)} ds = \tfrac{1}{c}\bar b'\tau, 
\eeqn
where we have used the mean value theorem of calculus to obtain the final result, so that both $\bar b$ and $\bar b'$ represent an earlier $\tau$-value of $b$ and $b'$ respectively.  Thus, the transformations represent explicit nonlinear and nonlocal relationships between $t,\;t'$ and $\tau$ (during interaction).     If we set
\[
{\mathbf{d}}^ *   = {{\mathbf{d}} \mathord{\left/ {\vphantom {{\mathbf{d}} {\gamma ({\mathbf{V}})}}} \right.\kern-\nulldelimiterspace} {\gamma ({\mathbf{V}})}} - (1 - \gamma ({\mathbf{V}}))\left[ {({{{\mathbf{V}} \cdot {\mathbf{d}})} \mathord{\left/
{\vphantom {{{\mathbf{V}} \cdot {\mathbf{d}})} {(\gamma ({\mathbf{V}}){\mathbf{V}}^2 }}} \right. \kern-\nulldelimiterspace} {(\gamma ({\mathbf{V}}){\mathbf{V}}^2 }})} \right]{\mathbf{V}},
\] 
we can write the transformations that fix $\tau$ as:
\beqn
\begin{gathered}
\quad {\mathbf{x'}} = \gamma ({\mathbf{V}})\left[ {{\mathbf{x}}^ *   - ({{\mathbf{V}} \mathord{\left/{\vphantom {{\mathbf{V}} c}} \right.\kern-\nulldelimiterspace} c})\bar b\tau } \right],\quad \quad \quad \quad \,{\mathbf{x}} = \gamma ({\mathbf{V}})\left[ {{\mathbf{x'}}^*   + ({{\mathbf{V}} \mathord{\left/{\vphantom {{\mathbf{V}} c}} \right.\kern-\nulldelimiterspace} c})\bar b'\tau } \right], \hfill \\
\quad\quad {\mathbf{u'}} = \gamma ({\mathbf{V}})\left[ {{\mathbf{u}}^ *   - ({{\mathbf{V}} \mathord{\left/{\vphantom {{\mathbf{V}} c}} \right.\kern-\nulldelimiterspace} c})b} \right],\quad \quad \quad \quad \; \,{\text{   }}{\mathbf{u}} = \gamma ({\mathbf{V}})\left[ {{\mathbf{u'}}^*   + ({{\mathbf{V}} \mathord{\left/ {\vphantom {{\mathbf{V}} c}} \right. \kern-\nulldelimiterspace} c})b'} \right], \hfill \\
\quad {\mathbf{a'}} = \gamma ({\mathbf{V}})\left\{ {{\mathbf{a}}^*  - {\mathbf{V}}\left[ {({{{\mathbf{u}} \cdot {\mathbf{a}})} \mathord{\left/{\vphantom {{{\mathbf{u}} \cdot {\mathbf{a}})} {(bc}}} \right.\kern-\nulldelimiterspace} {(bc}})} \right]} \right\},\quad {\text{   }}{\mathbf{a}} = \gamma ({\mathbf{V}})\left\{ {{\mathbf{a'}}^*   + {\mathbf{V}}\left[ {({{{\mathbf{u'}} \cdot {\mathbf{a'}})} \mathord{\left/{\vphantom {{{\mathbf{u'}} \cdot {\mathbf{a'}})} {(b'c}}} \right. \kern-\nulldelimiterspace} {(b'c}})} \right]} \right\}. \hfill \\ 
\end{gathered} 
\eeqn
If we put equation (\rf{t2}) in (\rf{t1}), differentiate with respect to $\tau$  and cancel the extra factor of $c$, we get the transformations between $b$ and $b'$:
\beqn\lb{t3}
\quad \quad b'(\tau ) = \gamma ({\mathbf{V}})\left[ {b(\tau ) - {{{\mathbf{u}} \cdot {\mathbf{V}}} \mathord{\left/{\vphantom {{{\mathbf{u}} \cdot {\mathbf{V}}} c}} \right. \kern-\nulldelimiterspace} c}} \right],\quad \quad \quad \quad {\text{ }}b(\tau ) = \gamma ({\mathbf{V}})\left[ {b'(\tau ) + {{{\mathbf{u'}} \cdot {\mathbf{V}}} \mathord{\left/
 {\vphantom {{{\mathbf{u'}} \cdot {\mathbf{V}}} c}} \right. \kern-\nulldelimiterspace} c}} \right].
\eeqn
A version of  proper time group has been independently discussed in the works of A. A. Unger (see \cite{52, 53, 54}).

\subsubsection{The Transformation of Maxwell's equations}
It is shown in \cite{28}, that Maxwell's equations transform the same as in the conventional theory.  However, the current  and charge densities transform in the following manner:  
\beqn
 {\bf J}'={\bf J}+(\gamma -1){{({\bf J}\cdot {\bf V})} \over {{\bf
V}^2}}V-\gamma {b \over c}\rho {\bf V},  
\eeqn   
\beqn\lb{d1}
b'\rho' =\gamma \left[ {b\rho -({{{\bf J}\cdot {\bf V}} \mathord{\left/ {\vphantom {{{\bf J}\cdot
{\bf V}} c}} \right.\kern-\nulldelimiterspace} c})} \right]. 
\eeqn  
Using the first equation of (\rf{t3}) in (\rf{d1}), we have:  
\beqn\lb{d2}
 \rho' ={{\rho -({{{\bf
J}\cdot {\bf V}}\mathord{\left/ {\vphantom {{{\bf J}\cdot {\bf V}} {bc}}}\right.
\kern-\nulldelimiterspace} {bc}})}\over {1-({{{\bf u}\cdot {\bf V}} \mathord{\left/
{\vphantom {{{\bf u}\cdot {\bf V}} {bc}}} \right. \kern- \nulldelimiterspace}
{bc}})}}.  
\eeqn   
This differs from the standard result, which we obtain if we set $b'=b=c$ in (\rf{d1}):  
\[ 
\rho' =\gamma \left[ {\rho
-({{{\bf J}\cdot {\bf V}} \mathord{\left/ {\vphantom {{{\bf J}\cdot {\bf V}} {c^2}}}
\right.\kern-\nulldelimiterspace} {c^2}})} \right]. 
\]    
If we insert the expression ${{\bf J}/c}={\bf \rho} ({{\bf
u}/b})$ in (\rf{d2}); 
we obtain:  
\beqn\lb{d3}
\rho' =\rho {{1-({{{\bf u}\cdot {\bf V}} \mathord{\left/ {\vphantom {{{\bf u}\cdot
{\bf V}} {b^2}}} \right. \kern-\nulldelimiterspace} {b^2}})} \over {1-({{{\bf u}\cdot
{\bf V}} \mathord{\left/ {\vphantom {{{\bf u}\cdot {\bf V}} {bc}}} \right. \kern-
\nulldelimiterspace} {bc}})}}. 
\eeqn 
To see the impact of equation (\rf{d3}), suppose that a (arbitrary) charge distribution is at rest in the unprimed frame. From (\rf{d3}), we see that ${\bf u}=0$, so that $\rho'=\rho$.  Since the primed frame is arbitrary, the charge distribution will appear the same to all observers.  This is what we would expect on physical grounds, so that relatively moving frames should not change the symmetry properties of charged objects.  In particular, a charge distribution in one frame should not display physical effects due to another observer's relative motion.   
\subsection{Global Internal Dynamics}
In this section, we study the motion of one particle as seen from the global internal point of view. We assume that, if there is an external force on the system as a whole,  the  system as a whole has reached equilibrium. In this case, we have 
\[
{H_i} = {H_{i0}} + {V_i} = \sqrt {{c^2}{\bs{\pi}} _i^{_2} + m_i^{_2}{c^4}}  + {V_i},
\]
where ${\bs{\pi}}_i = {{\mathbf{p}}_i} - \tfrac{{{e_i}}}{c}{{\mathbf{A}}_i},\;{{\mathbf{A}}_i} = \sum\nolimits_{j \ne i} {{{\mathbf{A}}_{ji}}} $ and ${V_i} = \sum\nolimits_{j \ne i} {{V_{ji}}} $.
We assume that ${{\mathbf{A}}_{ji}}, {{V}_{ji}}$ represents the action of the retarded vector potential (respectively scalar potential) of the $j$-th particle on the $i$-th particle. Since, in this case, ${e_i}{{\mathbf{A}}_{ji}} \ne {e_j}{{\mathbf{A}}_{ij}}$ (respectively, ${V_{ji}} \ne {V_{ij}}$), we do not include the customary factor of $1/2$ in our definition of the scalar and vector potentials for particle $i$ (see \cite{27}).

Recall that $ {{\mathbf{w}}_i} = d{{\mathbf{x}}_i}/dt$ and $ {{\mathbf{u}}_i} = d{{\mathbf{x}}_i}/d{\tau _i}$. We define $ {{\mathbf{v}}_i} = d{{\mathbf{x}}_i}/d{\tau}$.  From our identities, its easy to show that
\[
\frac{{{{\mathbf{w}}_i}}}{c} = \frac{{{{\mathbf{v}}_i}}}{b} = \frac{{{{\mathbf{u}}_i}}}{{{b_i}}} \Rightarrow {\gamma _i^{-1}} = \sqrt {1 - {{\left( {\tfrac{{{{\mathbf{w}}_i}}}{c}} \right)}^2}}  = \sqrt {1 - {{\left( {\tfrac{{{{\mathbf{v}}_i}}}{b}} \right)}^2}}  = \sqrt {1 - {{\left( {\tfrac{{{{\mathbf{u}}_i}}}{{{b_i}}}} \right)}^2}} .
\]
The velocity ${{\mathbf{v}}_i}$ is the one our observer sees when he uses the global canonical
proper-clock ($\tau$), of the system to compute the particle velocity, while ${{\mathbf{w}}_i}$ is the
one seen when he uses his clock to compute the particle velocity.  if ${{\mathbf{U}}}$  is zero, $ b = c$ and, from the global perspective, our theory looks like the conventional one.  As the system is closed,  ${{\mathbf{U}}}$ is constant and $\tau$ is  linearly related to $t$.  Since $\gamma _i^{ - 1} = \tfrac{1}{b}\sqrt {{{\mathbf{U}}^2} + {c^2} - {\mathbf{v}}_i^2} $, the physical interpretation is very different if ${{\mathbf{U}}}$ is not zero. Furthermore, it is easy to see that, even if ${{\mathbf{U}}}$ is zero in one frame, it will not be zero in any other frame which is in relative motion.  Using $K$, the equations of motion are: 
\[
\begin{gathered}
  {{\mathbf{v}}_i} = \frac{{d{{\mathbf{x}}_i}}}{{d\tau }} = \frac{{\partial K}}{{\partial {{\mathbf{p}}_i}}} = \frac{H}{{M{c^2}}}\frac{{{c^2} {\bs{\pi}_i} }}{{{H_{i0}}}} = \frac{b}{c}\frac{{{c^2} {\bs{\pi}} _i}}{{{H_{i0}}}}, \hfill \\
 \frac{{d{{\mathbf{p}}_i}}}{{d\tau }} = \frac{{\partial K}}{{\partial {{\mathbf{p}}_i}}} = \frac{H}{{M{c^2}}}\sum\limits_{k = 1}^n {\left[ {\frac{{{c^2}{\bs{\nabla} _i}{{\bs{\pi}} _k}}}{{{H_{i0}}}} - {\bs{\nabla}  _i}{V_k}} \right]}  = \frac{b}{c}\sum\limits_{k = 1}^n {\left[ {\frac{{{c^2}{\bs{\nabla}  _i}{{\bs{\pi}} _k}}}{{{H_{i0}}}} - {\bs{\nabla}  _i}{V_k}} \right]} . \hfill \\ 
\end{gathered} 
\]
Factoring out the $k=i$ term $\tfrac{{{e_i}}}{b}\left[ {\left( {{{\mathbf{v}}_i} \cdot {\nabla _i}} \right){{\mathbf{A}}_i} + {{\mathbf{v}}_i} \times \left( {{\nabla _i} \times {{\mathbf{A}}_i}} \right)} \right]$, we have:
\beqn\lb{F1}
\begin{gathered}
  \frac{c}{b}\frac{{d{{\mathbf{p}}_i}}}{{d\tau }} = \tfrac{{{e_i}}}{b}\left[ {\left( {{{\mathbf{v}}_i} \cdot {\nabla _i}} \right){{\mathbf{A}}_i} + {{\mathbf{v}}_i} \times \left( {{\nabla _i} \times {{\mathbf{A}}_i}} \right)} \right] \hfill \\
   + \sum\limits_{k \ne i}^n {\left\{ {\frac{{{e_k}}}{b}\left[ {\left( {{{\mathbf{v}}_k} \cdot {\nabla _i}} \right){{\mathbf{A}}_k} + {{\mathbf{v}}_k} \times \left( {{\nabla _i} \times {{\mathbf{A}}_k}} \right)} \right] - {\nabla _i}{V_k}} \right\}} . \hfill \\ 
\end{gathered}
\eeqn
Using
\[
\left( {{{\mathbf{v}}_i} \cdot {\nabla _i}} \right){{\mathbf{A}}_i} = \frac{{d{{\mathbf{A}}_i}}}{{d\tau }} - \frac{{\partial {{\mathbf{A}}_i}}}{{\partial \tau }},
\]
equation (\rf{F1}) becomes
\beqn\lb{F2}
\begin{gathered}
  \frac{c}{b}\frac{{d{{\mathbf{p}}_i}}}{{d\tau }} - \tfrac{{{e_i}}}{b}\frac{{d{{\mathbf{A}}_i}}}{{d\tau }} = \tfrac{{{e_i}}}{b}\left[ {{{\mathbf{v}}_i} \times {{\mathbf{B}}_i}} \right] - \tfrac{{{e_i}}}{b}\frac{{\partial {{\mathbf{A}}_i}}}{{\partial \tau }} - {\nabla _i}{V_i} \hfill \\
   + \sum\limits_{k \ne i}^n {\left\{ {\frac{{{e_k}}}{b}\left[ {\left( {{{\mathbf{v}}_k} \cdot {\nabla _i}} \right){{\mathbf{A}}_k} + {{\mathbf{v}}_k} \times \left( {{\nabla _i} \times {{\mathbf{A}}_k}} \right)} \right] - {\nabla _i}{V_k}} \right\}} . \hfill \\ 
\end{gathered} 
\eeqn
Note that equation (\rf{F2}) can also be written as:
\beqn\lb{F3}
\begin{gathered}
  \frac{{d{{\mathbf{p}}_i}}}{{dt}} - \tfrac{{{e_i}}}{c}\frac{{d{{\mathbf{A}}_i}}}{{dt}} = \tfrac{{{e_i}}}{c}\left[ {{{\mathbf{w}}_i} \times {{\mathbf{B}}_i}} \right] - \tfrac{{{e_i}}}{c}\frac{{\partial {{\mathbf{A}}_i}}}{{\partial t }} - {\nabla _i}{V_i} \hfill \\
   + \sum\limits_{k \ne i}^n {\left\{ {\frac{{{e_k}}}{c}\left[ {\left( {{{\mathbf{w}}_k} \cdot {\nabla _i}} \right){{\mathbf{A}}_k} + {{\mathbf{w}}_k} \times \left( {{\nabla _i} \times {{\mathbf{A}}_k}} \right)} \right] - {\nabla _i}{V_k}} \right\}} . \hfill \\ 
\end{gathered}
\eeqn
Thus, equations (\rf{F2}) and (\rf{F3}) are mathematically equivalent.  Set ${V_i} = e_i{\Phi _i}$ and ${{\mathbf{E}}_i} =  - \tfrac{1}{b}\left( {\partial {{\mathbf{A}}_i}/\partial \tau } \right) - {\nabla _i}{\Phi _i}$, then we can write: 
\[
{{\mathbf{F}}_i} = \frac{{{e_i}}}{b}\left( {{{\mathbf{v}}_i} \times {{\mathbf{B}}_i}} \right) - \frac{{{e_i}}}{b}\frac{{\partial {{\mathbf{A}}_i}}}{{\partial \tau }} - {\nabla _i}{V_i} = {e_i}{{\mathbf{E}}_i} + \frac{{{e_i}}}{b}\left( {{{\mathbf{v}}_i} \times {{\mathbf{B}}_i}} \right).
\]
We can then write equation (\rf{F2}) as:
\[
\begin{gathered}
  \frac{c}{b}\frac{{d{\pi _i}}}{{d\tau }} = {{\mathbf{F}}_i} \hfill \\
   - \sum\limits_{k \ne i}^n {\left\{ {\frac{{{e_k}}}{b}\left[ {\left( {{{\mathbf{v}}_k} \cdot {\nabla _k}} \right){{\mathbf{A}}_{ik}} + {{\mathbf{v}}_k} \times \left( {{{\mathbf{v}}_k} \times {{\mathbf{A}}_{ik}}} \right)} \right] - {\nabla _k}{V_{ik}}} \right\}} . \hfill \\ 
\end{gathered} 
\]
If we now use
\[
\begin{gathered}
  \left( {{{\mathbf{v}}_k} \cdot {\nabla _k}} \right){{\mathbf{A}}_{ik}} = \frac{{d{{\mathbf{A}}_{ik}}}}{{d\tau }} - \frac{{\partial {{\mathbf{A}}_{ik}}}}{{\partial \tau }},\;\;{{\mathbf{B}}_{ik}} = {\nabla _k} \times {{\mathbf{A}}_{ik}}, \hfill \\
  {{\mathbf{E}}_{ik}} =  - \frac{1}{b}\frac{{\partial {{\mathbf{A}}_{ik}}}}{{\partial \tau }} - {\nabla _k}{\Phi _{ik}},\quad {{\mathbf{F}}_{ik}} = {e_k}{{\mathbf{E}}_{ik}} + \frac{{{e_k}}}{b}\left( {{{\mathbf{v}}_k} \times {{\mathbf{B}}_{ik}}} \right), \hfill \\ 
\end{gathered} 
\]
the above becomes:
\beqn{\lb{rr}}
\frac{c}{b}\frac{{d{\bs{\pi} _i}}}{{d\tau }} = {{\mathbf{F}}_i} - \sum\limits_{k \ne i}^n {\left[ {{{\mathbf{F}}_{ik}} - \frac{{{e_k}}}{b}\frac{{d{{\mathbf{A}}_{ik}}}}{{d\tau }}} \right]} .
\eeqn
If we simplify and put the last term on the other side, we have:
\[
\frac{c}{b}\sum\limits_{i = 1}^n {\frac{{d{\pi _i}}}{{d\tau }} + \sum\limits_{i = 1}^n {\sum\limits_{k \ne i}^n {\frac{{{e_k}}}{b}\frac{{d{{\mathbf{A}}_{ik}}}}{{d\tau }}} } }  = \sum\limits_{i = 1}^n {{{\mathbf{F}}_i}}  - \sum\limits_{i = 1}^n {\sum\limits_{k \ne i}^n {{{\mathbf{F}}_{ik}}} }. 
\]
Performing the summations on both sides give us:
\[
\begin{gathered}
  \frac{c}{b}\sum\limits_{i = 1}^n {\frac{{d{\pi _i}}}{{d\tau }} + } \sum\limits_{i = 1}^n {\frac{{{e_k}}}{b}\frac{{d{{\mathbf{A}}_{ik}}}}{{d\tau }}}  = 0 \Rightarrow  \hfill \\
  \sum\limits_{i = 1}^n {\frac{{d{{\mathbf{p}}_i}}}{{d\tau }}}  = 0 = \frac{{d{\mathbf{P}}}}{{d\tau }}. \hfill \\ 
\end{gathered} 
\]
\subsection{Discussion}
We want to first discuss the relationship between equation ({\rf{ipd2}}) and equation ({\rf{rr}}).  For comparison, we first rewrite equation ({\rf{ipd2}}) with its indices:
\beqn{\lb{rr1}}
\frac{c}{{{b_i}}}\frac{{d{\bs{\pi} _i}}}{{d{\tau _i}}} = \left[ {{e_i}{{\mathbf{E}}_i} + \tfrac{{{e_i}}}{{{b_i}}}\left( {{{\mathbf{u}}_i} \times {{\mathbf{B}}_i}} \right)} \right] = \frac{{d{\bs{\pi} _i}}}{{dt}}.
\eeqn
If we use
\[
{{\mathbf{F}}_i} = {e_i}{{\mathbf{E}}_i} + \frac{{{e_i}}}{b}\left( {{{\mathbf{v}}_i} \times {{\mathbf{B}}_i}}, \right).
\]
We can  write equation ({\rf{rr}}) as
\beqn{\lb{rr2}}
\frac{{d{\bs{\pi} _i}}}{{dt }} =\frac{c}{b}\frac{{d{\bs{\pi} _i}}}{{d\tau }} = \left[  {e_i}{{\mathbf{E}}_i} + \frac{{{e_i}}}{b}\left( {{{\mathbf{v}}_i} \times {{\mathbf{B}}_i}}, \right) \right] - \sum\limits_{k \ne i}^n {\left[ {{{\mathbf{F}}_{ik}} - \frac{{{e_k}}}{b}\frac{{d{{\mathbf{A}}_{ik}}}}{{d\tau }}} \right]} .
\eeqn
Equation ({\rf{rr1}}) represents one particle in a field of force, as seen locally.  It does not react via action at a distance, but its reaction shows up in its field via the additional term in its wave equation.  When we look at the same particle from the center of mass frame (equation {\rf{rr2}})), we see the  force which acts on the particle and the action at a distance reaction force of the particle on all the other particles in the system.  
 
We interpret the extra term on the (far) right-hand side of equation ({\rf{rr2}}) as the long-sought back reaction field of the $i$-th particle on all the other particles (the cause for its acceleration). Furthermore, this term accounts for radiation reaction without the Lorentz-Dirac equation, self-energy (divergence), advanced potentials or any assumptions about the structure of the source.  It is important to point out that the mathematical equivalence is manifest in both cases and yet these equations cannot be obtained if we start with the observer clock.

It also follows that equation ({\rf{rr2}}) is consistent with conservation of global momentum.  This along with   conservation of total energy implies the following:
\begin{cor}{\rm{(Wheeler-Feynman)}} In the $({\bf{X}}, t)$ or $({\bf{X}}, \tau)$ variables, the closed system of interacting charged particles exchange energy and momentum via fields and photons (action at a distance) and all emitted energy and momentum is absorbed internally. 
\end{cor}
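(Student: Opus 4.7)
The plan is to pass from the single-particle equation (\rf{rr2}) to a statement about the whole system by summing over $i$ and then invoking the conservation laws that the canonical formalism of Section 3 has already put in place. First I would write
\[
\sum_{i=1}^n \frac{d{\bs{\pi}}_i}{dt} = \sum_{i=1}^n\bigl[e_i{\bf E}_i + \tf{e_i}{b}({\bf v}_i\times {\bf B}_i)\bigr] - \sum_{i=1}^n\sum_{k\ne i}\Bigl[{\bf F}_{ik} - \tf{e_k}{b}\tf{d{\bf A}_{ik}}{d\tau}\Bigr],
\]
and use the cancellation already displayed immediately before the corollary, namely that after a relabeling $i\leftrightarrow k$ in the double sum, the local Lorentz terms and the vector-potential time derivatives reorganize into a single total derivative that collapses to zero. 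This yields $d{\bf P}/d\tau = 0$, i.e.\ conservation of the global canonical momentum.

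Next I would establish energy conservation. Because $K = {\bf P}^2/(2M) + Mc^2$ is the canonical Hamiltonian and carries no explicit $\tau$-dependence for a closed system, $\{K,K\}=0$ gives $dK/d\tau = 0$; equivalently, $H$ is conserved in the $({\bf X},t)$ description via (\rf{cr}). Together with the momentum identity just obtained, the total energy and momentum of the system of particles plus their internal fields is conserved.

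The physical interpretation is then immediate from (\rf{rr2}): the extra sum on the right-hand side is precisely the action-at-a-distance back-reaction of all partners $k\ne i$ on particle $i$, so any energy or momentum radiated by $i$ (arising from the dissipative term in the wave equations (\rf{imx3})) must reappear as a mechanical contribution to the motion of the remaining particles. Because conservation forbids leakage out of a closed system and the only channels of exchange on the right-hand side of (\rf{rr2}) are fields (the ${\bf E}_i, {\bf B}_i$ terms) and the retarded inter-particle potentials $({\bf A}_{ik},\Phi_{ik})$, the net emitted radiation has to be re-absorbed internally, which is the Wheeler-Feynman absorption hypothesis. Finally, the mathematical equivalence of the $({\bf X}, t)$ and $({\bf X}, \tau)$ descriptions (Theorem 3.6 together with identities (\rf{id2}) and (\rf{id3})) promotes the same conclusion to both sets of global variables.

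The main obstacle I expect is keeping the bookkeeping of the double sum clean, since the paper has explicitly warned that $e_i{\bf A}_{ji}\neq e_j{\bf A}_{ij}$ and no factor of $1/2$ is included in the potentials. The cancellation that yields $d{\bf P}/d\tau = 0$ therefore depends on recognizing that $\sum_i e_i\, d{\bf A}_i/d\tau = \sum_i\sum_{k\ne i} e_k\, d{\bf A}_{ik}/d\tau$ only after the symmetric relabeling $i\leftrightarrow k$, and this is precisely the step where the emitter/absorber symmetry of Wheeler-Feynman silently enters. A secondary subtlety is the word \emph{internally}: one must argue that the dissipative longitudinal radiation identified in Section 4 cannot propagate to infinity in isolation, because the back-reaction channel in (\rf{rr2}) is the \emph{only} sink available once self-interaction, advanced potentials and mass renormalization have been discarded.
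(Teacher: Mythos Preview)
Your proposal is correct and follows essentially the same route as the paper: the paper's argument for this corollary is simply the momentum-conservation identity $d{\bf P}/d\tau=0$ obtained by summing (\ref{rr}) over $i$, together with conservation of total energy from the Hamiltonian structure, after which the absorption statement is read off from the back-reaction term in (\ref{rr2}). You are in fact more explicit than the paper about the $i\leftrightarrow k$ relabeling and the asymmetry $e_i{\bf A}_{ji}\ne e_j{\bf A}_{ij}$, which the paper dispatches with the phrase ``performing the summations on both sides.''
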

Thus, the absorption hypothesis of  Wheeler and Feynman is automatically satisfied, without the use of advanced solutions to Maxwell's equations.
\subsubsection{Relationship to Quantum Theory}
Suppose there are only two particles interacting.  The retarded nature of the potentials means that their interactions will always be slightly off target, so that the law of action and reaction is only approximate at this level.  The more particles involved, the more likely that the reaction radiation will miss its intended mark.  The assumption that the system is closed and Theorem 4.3 implies that, after a short time, the system will reach equilibrium with particles, fields and  photons.  Thus, blackbody radiation is a natural consequence of charged particle interaction in a closed system.  This also suggests that the quantum behavior we observe in atoms is a consequence  of the same mechanism. 
\section{Part III: Implications and Applications}
\subsection{Newtonian Clock}
A fundamental conclusion of the last two sections is that, for any system of particles there always exists a unique observer-independent measure of time.  One consequence is the following theorem, which we first proved in \cite{28}.
\begin{thm}{\lb{Newtonian}}
Suppose that the observable universe is homogenous, isotropic and representable in the sense that it is independent of our observed portion of the universe.  Then the universe has a unique clock that is available to all observers.
\end{thm}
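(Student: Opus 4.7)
The plan is to reduce the statement directly to Theorem 2.1 by showing that the observable universe, viewed as a single closed system of interacting particles, admits a well-defined canonical center of mass ${\bf X}$, total Hamiltonian $H$, effective mass $M$, and total momentum ${\bf P}$. Once this is established, the proper time $d\tau = (Mc^2/H)\,dt$ is precisely the invariant clock constructed in Section 2, and the Zachary transformation of Theorem 3.4 guarantees that $\tau$ is recoverable by every inertial observer from their own local data.

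First I would use the representability hypothesis to promote the local definitions of ${\bf X}$, $H$ and ${\bf P}$ from (\rf{cem}) and the surrounding construction to global ones: since the universe outside any observer's patch has the same structure as the patch itself, these quantities are intrinsic to the universe rather than to the particular sample from which they are computed. Homogeneity then delivers translational invariance and the conservation of ${\bf P}$, while isotropy delivers rotational invariance and allows the global spin ${\bf S}$ to be taken as zero, so that (\rf{cem}) reduces to its leading term and ${\bf X}$ is well-defined up to an additive constant that can be absorbed into a choice of spatial origin.

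Next I would invoke the $2.7^{\circ}$K MBR preferred frame identified in Subsection 1.2. By isotropy this frame is the unique inertial frame in which the cosmic distribution of matter and radiation carries zero net momentum, so that ${\bf P} = {\bf 0}$, $H = Mc^2$ and $b = \sqrt{{\bf U}^2 + c^2} = c$ in that frame. Therefore $\tau$ coincides with the MBR-frame time $t$ there, and inherits the status of a genuine Newtonian universal time. An arbitrary observer moving with velocity ${\bf W}$ relative to the MBR frame recovers the same $\tau$ from their own clock via the scale-factor relation $b' = \gamma({\bf W})(b - {\bf U}\cdot{\bf W}/c)$ established in the proof of Theorem 2.1, or equivalently by applying the Zachary map ${\mathcal Z}[O',O,\tau]$ of Theorem 3.4. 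Choosing the origin $\tau = 0$ at the big bang singularity then produces the desired unique clock available to every observer.

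The main obstacle is the first step: making sense of $H$, $M$, and a canonical center of mass for what is formally an infinite (or at least cosmologically vast) system. This is precisely what the representability assumption is designed to address, but some care is required to argue that the ratio $Mc^2/H$, which controls the clock rate, is well-defined and remains constant under enlargement of the spatial sample. Once representability is accepted as furnishing this constancy, the conclusion follows immediately from the machinery of Section 2, and the resulting $\tau$ is manifestly independent of the observer.
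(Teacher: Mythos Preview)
Your argument is correct but takes a more elaborate route than the paper's. The paper's proof is essentially three lines: the hypotheses of homogeneity, isotropy, and representability are used to assert directly that the ratio $H/(Mc^2)$ is not merely constant in time but is the \emph{same numerical value} for every observer. Given that, $d\tau = (Mc^2/H)\,dt = (Mc^2/H)\,dt'$ with a common proportionality constant yields $dt = dt'$, hence $t = t' \triangleq \tau_N$, and the universal clock is simply identified with the coincident observer times.

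You instead single out the MBR rest frame as the one in which ${\bf P}={\bf 0}$ and $b=c$, and then have a generic observer reconstruct $\tau$ via the scale-factor relation or the Zachary map of Theorem~3.4. This is sound and has the virtue of making explicit where the physical information comes from. But it under-uses the isotropy/representability hypothesis: in the paper that hypothesis is precisely what makes $Mc^2/H$ observer-independent, so no privileged frame or transformation machinery is needed at all. Your detour through ${\bf S}=0$ and the structure of ${\bf X}$ is likewise unnecessary, since only the scalar ratio $Mc^2/H$ enters the definition of $\tau$. Finally, your conclusion is slightly weaker than the paper's: you obtain a common $\tau$ recoverable by every observer, whereas the paper concludes the stronger statement $t=t'$.
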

\begin{proof} Under the stated conditions ${H \mathord{\left/
 {\vphantom {H {M{c^2}}}} \right.
 \kern-\nulldelimiterspace} {M{c^2}}}$ is constant for our observed portion of the universe.  Since this property is observer independent, every observer will obtain the same ratio.  Thus, for any two observer's
\[  
 d{\tau} = \frac{{M{c^2}}}{H}dt=\frac{{M{c^2}}}{H}dt', \quad \Rightarrow t=t' \triangleq \tau_N.  
 \]
It follows that $\tau_N$ is uniquely defined.
\end{proof}   
\begin{thm} {\tx{(Peebles)}}
Suppose all observers choose a frame that is at rest with respect to the 2.7 $^{\circ}$K microwave background radiation, then  all laws will be invariant and not just covariant with respect to Lorentz transformations.
\end{thm}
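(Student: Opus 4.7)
The plan is to combine the empirical content of the MBR rest frame with Theorem~\ref{Newtonian} (the Newtonian clock theorem). First, I would recall what the hypothesis actually restricts: by the isotropy data quoted in Section~1.2, being at rest with respect to the $2.7^{\circ}$K MBR pins down a globally defined rest frame, so any two observers satisfying the hypothesis must have zero relative velocity $\mathbf{V}=\mathbf{0}$. They may still differ by a time-independent spatial rotation and a rigid spatial translation, but boosts between them are excluded by construction.

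Second, I would invoke Theorem~\ref{Newtonian} under the standing homogeneity and isotropy background to identify the common Newtonian time $t=t'=\tau_N$ for every observer in the admissible class. Since $Mc^2/H$ is observer-independent for each closed subsystem, the canonical proper clock $\tau$ of any such subsystem coincides with $\tau_N$ up to an additive constant, and the scale factors at the particle level satisfy $b=b'$.

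Third, I would specialize the dual theory's transformation laws to this restricted class. Setting $\mathbf{V}=\mathbf{0}$ in the spatial Lorentz formulas~(1.1) collapses $\gamma(\mathbf{V})$ to $1$ and reduces the allowed coordinate change to a rigid Euclidean motion (spatial rotation composed with a translation). The proper-time boost of $b$ in (4.33) becomes the identity, the charge-density transformation~(4.37) reduces to $\rho'=\rho$, and $\mathbf{J}$ transforms as an ordinary spatial vector. The same reduction applies to every Maxwell field quantity: each Lorentz tensor is acted on only by the rotation, so the functional form of Maxwell's equations, the Hamiltonian equations of Section~3, and the dual wave equations of Section~4 is literally preserved, not merely covariantly mapped to a different shape.

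The main obstacle is not any calculation but the operational content of the hypothesis itself: what it means for two spatially separated observers to be simultaneously at rest with respect to the MBR. I would handle this by appealing to the empirical isotropy bound quoted in Section~1.2 (anisotropy $\le 0.001\%$) to license the MBR rest frame as a physically meaningful, globally available frame, and then rely on Theorem~\ref{Newtonian} to provide the single observer-independent clock $\tau_N$ that synchronizes every admissible observer. With these in place, Lorentz covariance degenerates to literal Lorentz invariance within the class of MBR-comoving frames, which is the statement of the theorem.
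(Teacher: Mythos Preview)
The paper does not actually supply a proof of this theorem; it is stated bare, attributed to Peebles, and the text moves on immediately to the cluster decomposition property. So there is nothing in the paper to compare your argument against line by line.

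That said, your argument is consistent with the surrounding material and does exactly what the paper would presumably intend: you reduce the hypothesis to ${\mathbf V}={\mathbf 0}$ between any two admissible observers, invoke Theorem~\ref{Newtonian} to secure a common clock $\tau_N$, and then specialize the Lorentz and proper-time transformation formulas so that $\gamma({\mathbf V})=1$, $b'=b$, $\rho'=\rho$, and every field quantity transforms only under the residual Euclidean group. That is precisely the mechanism by which covariance collapses to invariance in this framework. One small caution: your appeal to Theorem~\ref{Newtonian} imports the homogeneity and isotropy hypotheses of that theorem, which are not stated in the Peebles theorem itself; you should flag that you are assuming the same cosmological background, since without it the common $\tau_N$ is not guaranteed and your synchronization step would need a separate justification.
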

In the study of physical systems one may not be interested in the behavior of the global system, but  in some subsystem.  The cluster decomposition property is a requirement of any theory purporting to represent  the real world.  This is the property that, if any two or more subsystems become widely separated, then they may can treated as independent systems (clusters).  We prove the following  in \cite{28}.
\begin{thm}
Suppose the system of particles can be decomposed into two or more clusters.  Then there exists a unique (local) clock and corresponding canonical Hamiltonian for each cluster. 
\end{thm}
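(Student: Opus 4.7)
The plan is to apply the construction of Section 2 independently to each cluster, and then to verify that the construction remains canonical with respect to the original phase-space structure while retaining the Pryce uniqueness property at the cluster level.

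First I would fix a decomposition of the $n$ particles into disjoint clusters $C_1,\ldots,C_k$ and assume, as is implicit in the cluster decomposition property, that in the regime where these clusters are sufficiently well separated the interaction potentials between particles in different clusters become negligible. Under this assumption the total Hamiltonian splits additively as $H = \sum_{\alpha=1}^{k} H_{\alpha}$, with $H_{\alpha}$ depending only on the phase-space variables $\{({\bf x}_i, {\bf p}_i) : i \in C_{\alpha}\}$; similarly $\mathbf{P}_{\alpha} = \sum_{i\in C_{\alpha}} \mathbf{p}_i$ and the total spin $\mathbf{S}_\alpha$ can be defined intrinsically inside the cluster.

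Next I would run the construction from Section~2 cluster by cluster: define the effective mass ${M_\alpha}c^{2}=\sqrt{H_{\alpha}^{2}-c^{2}\mathbf{P}_{\alpha}^{2}}$, the local clock $d\tau_{\alpha}=(M_{\alpha}c^{2}/H_{\alpha})\,dt$, and the canonical center of mass $\mathbf{X}_{\alpha}$ via Pryce's formula (\ref{cem}) restricted to the particles of $C_{\alpha}$. Because Pryce's prescription singles out the unique canonical choice of center of mass, both $\mathbf{X}_{\alpha}$ and $\tau_{\alpha}$ are uniquely determined by the cluster itself. The corresponding canonical Hamiltonian is then obtained by the same Poisson-bracket argument that produced (\ref{H}) in Section~3, giving
\[
K_{\alpha}=\frac{H_{\alpha}^{2}}{2M_{\alpha}c^{2}}+\frac{M_{\alpha}c^{2}}{2}=\frac{\mathbf{P}_{\alpha}^{2}}{2M_{\alpha}}+M_{\alpha}c^{2},
\]
together with the evolution rule $dW/d\tau_{\alpha}=\{K_{\alpha},W\}$ for any observable supported in $C_{\alpha}$.

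The main obstacle will be the compatibility of the cluster decomposition with the canonical structure: I need to check that $\{H_{\alpha},H_{\beta}\}=0$ for $\alpha\neq\beta$ in the separated regime, so that the ratios $M_{\alpha}c^{2}/H_{\alpha}$ are genuine constants of motion and the clocks $\tau_{\alpha}$ scale linearly with $t$ within each cluster. Because $H_{\alpha}$ and $H_{\beta}$ depend on disjoint sets of phase-space variables once the inter-cluster potentials are dropped, this Poisson bracket vanishes automatically, and each $(\mathbf{X}_{\alpha},\mathbf{P}_{\alpha},\tau_{\alpha},K_{\alpha})$ inherits the same algebraic structure (\ref{cr'}) derived for the global system. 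Uniqueness of each local clock then follows from the uniqueness of Pryce's canonical center of mass applied to each cluster, completing the proof.
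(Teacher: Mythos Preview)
The paper does not actually contain a proof of this theorem; immediately before the statement it says ``We prove the following in \cite{28}'' and gives no further argument. So there is nothing in the present paper to compare your proposal against.

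That said, your outline is precisely the argument one would expect from the machinery of Sections~2--3: once the inter-cluster interactions are dropped, each $H_\alpha$ depends only on its own phase-space block, $M_\alpha c^2/H_\alpha$ is conserved, and the global construction (effective mass, Pryce center of mass, proper clock $d\tau_\alpha=(M_\alpha c^2/H_\alpha)\,dt$, and $K_\alpha=H_\alpha^2/2M_\alpha c^2+M_\alpha c^2/2$) goes through verbatim on each cluster. The vanishing of $\{H_\alpha,H_\beta\}$ for $\alpha\neq\beta$ is indeed automatic from the disjointness of supports, so your compatibility check is fine. The only point worth tightening is the word ``unique'': in the paper's global construction the uniqueness of $\tau$ (as an observer-independent quantity) is argued not from Pryce's result alone but from the invariance of the ratio $Mc^2/H$ under change of inertial observer (Theorem~2.1). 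You should invoke the same reasoning at the cluster level---$M_\alpha c^2/H_\alpha$ is a Lorentz scalar for the $\alpha$-th subsystem---rather than resting uniqueness solely on the Pryce choice of $\mathbf{X}_\alpha$.
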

\subsection{The Big Bang}
The current cosmological model for the universe assumes that it began around 13.8 billion years ago from a singularity (hot big bang).    There was no before, at one moment there was nothing and at the next moment the singularity appeared.  The theory only proposes to explain the time after this event when the singularity begin to expand to the universe we see today.   It is not a complete theory in the normal sense of a physical model.  The model leaves the following questions unanswered:
\begin{enumerate}
\item How is it that the universe appears to be so close to flat and uniform on a scale of almost 10 billion light years (flatness problem)?
\item How is it that regions in causally disconnected parts of space and time appear to have the same physical properties (horizon problem)?
\item How is it that we see matter and have not detected equal amounts of antimatter? 
\item How does the universe begin without conservation of energy (second law of thermodynamics)?
\item How does the universe begin without conservation of linear and angular momentum?
\end{enumerate}
The theory of (cosmic ) inflation  was introduced  by A. Guth in 1981 \cite{29} to provide a solution to the first two problems. In our view, the last three questions are equally (if not more) important.  Inflation assumes that, immediately after the big bang,  a superluminal (exponential) expansion rate happened, so that the space between any two points expanded faster than (our assumed) speed of light could travel between them.  This expansion solved the flatness problem and the horizon problem. The observable universe inflated from a very small volume and quickly became flat homogeneous and isotropic. The  inflation hypothesis is a solution to the first two problems, but lacking any evidence for a field that drives it, the theory has many critics and other approaches have been suggested (see \cite{ 30, 31, 32, 33} and references).  

At a minimum, any model of the beginning should be consistent with our current (experimentally obtained) understanding about the laws of the universe:
\begin{itemize}
\item Whenever an antiparticle is observed in experiment, we always find that it is also accompanied by a particle. 
\item  Whenever an interaction is observed in experiment, a complete analysis always shows that energy is conserved.
\item  Whenever an interaction is observed in experiment, a complete analysis always shows that linear momentum and angular momentum are conserved.
\end{itemize}
There are no compelling reasons for these known laws of physics to be violated for the big bang model to be true.  In this section, we suggest a slight alteration of the beginning, which brings the big bang model in line with our experimental understanding of the universe without the need for inflation,  (unknown or observed) new particles, fields, dark energy or other hypothetical devices.

We first revisit our conceptual view of the real numbers and their representation. Recall that a field is a set $A$ that has two binary operations $\oplus$ and $\odot$ that satisfies all our common experience with real numbers.  Formally:
\begin{Def}The real numbers is a triplet $(\R, +, \cdot)$, which is a field, with $0$ as the additive identity (i.e., $a+0=a$ for all $a \in \R$) and $1$ as the multiplicative identity (i.e., $a\cdot 1=a$ for all $a \in \R$). 
\end{Def}
This structure was designed by mathematicians without regard to its possible use in physics.  As a consequence the structural asymmetry went unnoticed and physicists accepted it without investigation until Santilli \cite{34} defined the isodual number field.  His definition is more general. For our purpose, we only need  the following.
\begin{Def}The isodual real numbers $(\hat{\R}, +, *)$ is a field, with $0=\hat{0}$ as the additive identity (i.e., $\hat{a}+\hat{0}=\hat{a}$ for all $-a=\hat{a} \in \hat{\R}$) and $\hat{1}=-1$ as the multiplicative identity (i.e., $\hat a * \hat 1 = ( - a)( - 1)( - 1) = \hat a$ for all $\hat{a} \in \hat{\R}$). 
\end{Def}
We note that we can obtain the isodual of any physical quantity $\hat{A}$ from the  equation  $A+\hat{A}=0$.
\begin{ex}
A simple example from quantum theory is the following:  the evolution of particle is defined on a Hilbert space $\mcH$ over the complex numbers $\C=\R + i\R$, with Hamiltonian $H$ by the equation:
\[
i\hbar \frac{{\partial \psi }}{{\partial t}} = H\psi .
\]
The conjugate equation is:
\[ 
- i\hbar \frac{{\partial {\psi ^*}}}{{\partial t}} = H{\psi ^*}.
\]
If we use ${\hat{\C}}$ as our number field, we can write the above equation as:
\[
\hat{i}*\hat{\hbar} * \frac{{\partial {\psi ^*}}}{{\partial \hat t}} = {\hat{H}}*{\psi ^*}
\]
Thus, this approach allows us to naturally  view anti-particles as time reversed particles, with their evolution defined on $\mcH^*$ over ${\hat{\C}}$. 
\end{ex}
\begin{rem} Santilli \cite{34} has shown that charge conjugation and isoduallity are equivalent for  the particle-antiparticle symmetry operation.  However,  isoduallity allows us to view the existence of antimatter and charge conservation as fundamental aspects of the universe, while also explaining why large amounts of antimatter are not found in this universe.    
\end{rem}
In the diagram below, we provide a new picture of the big bang beginning.  In this case, two universes are created, one going forward in $\tau_N$ and one going backward in $\tau_N$ (Newtonian time), relative to our reality.

Our solution follows the suggestions of Moffet \cite{34}.  His varying speed of light hypothesis is  consistent with the use of $b = \sqrt {{U^2} + {c^2}}$ for the fine tuning mechanism, with $U$ sufficiently large.    After equilibrium is reached, $U$ can slow down to zero, while $b$ reduces to $c$.  This would explain the flatness and horizon problems, but requires no new particles, fields, dark energy or other devices.
\[
\xleftarrow{{{{\hat \tau }_N}}}\frac{ \nwarrow }{ \swarrow }\mathop |\limits^{{\tau _N} = 0} \frac{ \nearrow }{ \searrow }\xrightarrow{{{\tau _N}}}.
\]
This view has the following advantages: we obtain
\begin{enumerate}
\item a natural arrow for time, with a zero initial point.
\item a natural explanation for the lack of antimatter in this universe.
\item antiparticles as particles moving backward in (Newtonian) proper time.
\item conservation of energy, linear and angular momentum.
\end{enumerate}
It is important to be clear that our assumption does not imply that there are any other symmetries or necessary similarities between the two universes. 

After writing this review, we came across a paper by Nielsen and Ninomiya \cite{43}, which also suggests a time reversed theory as an approach to saving the second law of thermodynamics.
\subsubsection{The Problem of Origin}
The possibility of other causes for the 2.7$^{\circ}$K mbr have been suggested in the past.
In a recent series of papers, Ares de Parga and co-workers \cite{44, 45, 46, 47, 48, 49, 50}  have developed a complete and consistent theory of relativistic thermodynamics.  They have extended it to both classical statistical mechanics and quantum statistical mechanics and obtained  the particle number density due to Henry \cite{51}.   As an application, they have studied the superposition of the radiation distributions from a number of  blackbodies radiating at different  temperatures and were able to reproduce the present 2.7$^{\circ}$K mbr.   This led them to suggest that the 2.7$^{\circ}$K mbr may have a  galactic origin.  In \cite{48} they proposed a feasible experiment that will  determine if such a cause is possible.
\section{Conclusion}
In this paper, we have discussed the only possible direct approaches to a relativistic theory for two or more particles.  These approaches are: that of Minkowski, that of Einstein and that of the recently discovered dual to the latter. We provide a table below, comparing the three approaches.    The one supported by the Minkowski postulate is the least complete of the three.   The Einstein and the dual version are both physically and mathematically consistent for any number of particles and have mathematically equivalent equations of motion.  However, the particle wave equations for their fields are not mathematically equivalent.  The dual version contains an additional longitudinal radiation term that appears instantaneously with  acceleration and does not depend on the nature of the force. This version predicts  photons are particles with nonzero effective dynamical mass.  It further predicts that radiation from a betatron of any frequency will not produce photoelectrons.  At the global level, the Wheeler-Feynman absorption hypothesis is a corollary of both the Einstein and dual theories, without  advanced potentials or any assumptions about the structure of the source.  This also proves the Wheeler-Feynman conjecture that action at a distance and field theory are complimentary aspects of each other.

By introducing a symmetric view of the number line, we have modified the standard version of the big bang to provide an arrow for time, explain the lack antimatter in the universe,  explain the flatness problem, the horizon problem, provide conservation of time, energy,  linear and angular momentum,  without inflation or any additional hypothesis. In addition, we predict that matter and antimatter are gravitationally repulsive.

The most important conclusion from this investigation is that the interpretation of experimental observations is not unique.  In order to make this last statement explicit in a very powerful manner, recall that many measurements are based on the dimensionless ratio $\beta  = {{\mathbf{w}} \mathord{\left/
 {\vphantom {{\mathbf{w}} c}} \right.     \kern-\nulldelimiterspace} c}$.  However, 
${{\mathbf{w}} \mathord{\left/   {\vphantom {{\mathbf{w}} c}} \right.
 \kern-\nulldelimiterspace} c} \equiv {{\mathbf{u}} \mathord{\left/
 {\vphantom {{\mathbf{u}} b}} \right.   \kern-\nulldelimiterspace} b}$, so we see that measurements of velocity and the speed of light for distant objects are totally ambiguous.  For example, (\cite{35}, pp. 556-561), the red shift factor $z$, used to determined distances in astronomy, is defined by:
\[
z = \sqrt {\frac{{1 + \tfrac{{\mathbf{w}}}
{c}}}
{{1 - \tfrac{{\mathbf{w}} }
{c}}}}  - 1 \equiv \sqrt {\frac{{1 + \tfrac{{\mathbf{u}}}
{b}}}
{{1 - \tfrac{{\mathbf{u}}}
{b}}}}  - 1.
\]
We thus conclude that distant objects may have much higher velocities and light may have a velocities higher than $c$, without any contradiction.    
\newpage

\begin{tabular}{r|c||c|c}
{\bf For $n>1$}: \quad\quad \quad	& {\bf Minkowski} & {\bf Einstein} & {\bf Dual} \\
\hline \hline
{\Small \bf{reference frame }} &  {\Small inertial } & {\Small inertial }  & {\Small inertial }\\
\hline
{\Small \bf speed of light } & {\Small independent of source} & {\Small independent of source} & {\Small depends on source} \\
\hline
{\Small \bf space-time} & {\Small dependent variables} & {\Small independent variables} & {\Small independent variables} \\
\hline
{\Small \bf transfomation group} & {\Small linear Lorentz} & {\Small linear Lorentz} & {\Small nonlinear Lorentz} \\
\hline
{\Small \bf cluster property} & {\Small  non-existent } & {\Small possible theory} & {\Small general theory} \\
\hline
{\Small \bf many-particle} & {\Small non-existent} & {\Small possible theory} & {\Small general theory}\\
\hline
{\Small \bf radiation reaction} & {\Small highly problematic} & {\Small partial theory} & {\Small complete  theory} \\
\hline
{\Small \bf quantum theory } & {\Small non-existent } & {\Small follows from theory}  & {\Small follows from theory} \\
\hline
{\Small \bf arrow for time } & {\Small non-existent} & {\Small follows from  theory}  & {\Small follows from theory} \\
\hline
{\Small \bf universal clock} & {\Small non-existent} & {\Small follows from  theory} & {\Small follows from theory} \\
\hline
{\Small \bf big bang} & {\Small possible  theory} & {\Small possible  theory} & {\Small possible theory} \\
\hline
{\Small \bf theory of gravity } & {\Small possible  theory*} & {\Small possible  theory**} & {\Small possible theory**} \\
\hline
\end{tabular}
\noindent{\SMALL *The general theory of relativity.} 
{\SMALL **The program suggested by Dirac (see quotes before section 1.2).}\\  
\section{\bf Acknowledments}  

We would like to thank Professors J. Dunning-Davis,  L. Horwitz, M. C. Land, E. Leib and R. M. Santilli    
 for their continued interest and many constructive suggestions.

\bibliographystyle{amsalpha}

\end{document}